\def\BibTeX{{\rm B\kern-.05em{\sc i\kern-.025em b}\kern-.08em
    T\kern-.1667em\lower.7ex\hbox{E}\kern-.125emX}}
\newtheorem{theorem}{Theorem}
\newtheorem{lemma}{Lemma}
\newtheorem{definition}{Definition}
\newtheorem{fact}{Fact}
\newtheorem{proposition}{Proposition}
\newtheorem{assumption}{Assumption}
\newcommand{\ie}{{\em i.e.}}
\newenvironment{proofof}[1]
  {\begin{proof}[\normalfont\textbf{Proof of #1:}]}
  {\end{proof}}
\begin{document}

\title{On Resolving Non-Preemptivity in Multitask Scheduling: An Optimal Algorithm in Deterministic and Stochastic Worlds}

\author{Wenxin Li ~\IEEEmembership{}
\thanks{An early version of this paper was accepted by ACM SIGMETRICS 2022 as a poster paper \cite{li2023performance}.

W. Li is with Department of Electrical and Computer Engineering, The Ohio State University, Columbus, USA. (e-mail: wenxinliwx.1@gmail.com)}
}

\markboth{IEEE Transactions on Networking}%
{Shell \MakeLowercase{\textit{et al.}}: A Sample Article Using IEEEtran.cls for IEEE Journals}


\maketitle

\begin{abstract}
The efficient scheduling of multi-task jobs across multiprocessor systems has become increasingly critical with the rapid expansion of computational systems. This challenge, known as \emph{Multiprocessor Multitask Scheduling (MPMS)}, is essential for optimizing the performance and scalability of applications in fields such as cloud computing and deep learning. In this paper, we study the MPMS problem under both deterministic and stochastic models, where each job is composed of multiple tasks and can only be completed when all its tasks are finished. We introduce $\mathsf{NP}$-$\mathsf{SRPT}$, a non-preemptive variant of the Shortest Remaining Processing Time (SRPT) algorithm, designed to accommodate scenarios with non-preemptive tasks. Our algorithm achieves a competitive ratio of $\ln \alpha + \beta + 1$ for minimizing response time, where $\alpha$ represents the ratio of the largest to the smallest job workload, and $\beta$ captures the ratio of the largest non-preemptive task workload to the smallest job workload. We further establish that this competitive ratio is order-optimal when the number of processors is fixed. For stochastic systems modeled as $\mathsf{M}$/$\mathsf{G}$/$\mathsf{N}$ queues, where job arrivals follow a Poisson process and task workloads are drawn from a general distribution, we prove that $\mathsf{NP}$-$\mathsf{SRPT}$ achieves asymptotically optimal mean response time as the traffic intensity $\rho$ approaches $1$, assuming the task size distribution has finite support. Moreover, the asymptotic optimality extends to cases with infinite task size distributions under mild probabilistic assumptions, including the standard $\mathsf{M}$/$\mathsf{M}$/$\mathsf{N}$ model. {\color{black}{Finally, we extend the analysis to the practical setting of unknown job sizes, proving that non-preemptive adaptations of the $\mathsf{M\text{-}Gittins}$ and $\mathsf{M\text{-}SERPT}$ policies achieve asymptotic optimality and near-optimality, respectively, for a broad class of job size distributions.}} Experimental results validate the effectiveness of $\mathsf{NP}$-$\mathsf{SRPT}$, demonstrating its asymptotic optimality in both theoretical and practical settings.
\end{abstract}

\begin{IEEEkeywords}
Multiprocessor Multitask Scheduling, Response Time, Non-Preemptive, SRPT 
\end{IEEEkeywords}

\section{Introduction}

Scheduling is fundamentally about the optimal allocation of resources over time to perform a collection of jobs. With widespread applications in various fields, scheduling jobs to minimize the total response time (also known as flow time \cite{leonardi1997approximating}, sojourn time \cite{DBLP:journals/orl/Bansal05} and delay \cite{wang2019delay}) is a fundamental problem in computer science and operation research that has been extensively studied. As an important metric measuring the quality of a scheduler, response time, is formally defined as the difference between job completion time \cite{li2023work, li2020asymptotic} and releasing date, and characterizes the amount of time that the job spends in the system. 

Optimizing the response time of single-task jobs has been considered both in offline and online scenarios. If preemption is allowed, the \emph{Shortest Remaining Processing Time} (SRPT) discipline is shown to be optimal in single machine environment. Many generalizations of this basic formulation become NP-hard, for example, minimizing the total response time in non-preemptive single machine model and preemptive model with two machines~\cite{leonardi1997approximating}. When jobs arrive online, no information about jobs is known to the algorithm in advance, several algorithms with logarithmic competitive ratios are proposed in various settings~\cite{ azar2018improved,leonardi1997approximating}. On the other hand, while SRPT minimizes the mean response time sample-path wise, it requires the knowledge of remaining job service time. Gittins proved that the Gittins index policy minimizes the mean response time in an $\mathsf{M}$/$\mathsf{G}$/$1$ queue, which only requires the access to the information about job size distribution \cite{gittins2011multi}.

\begin{figure*}
    \centering
    \includegraphics[width=\linewidth]{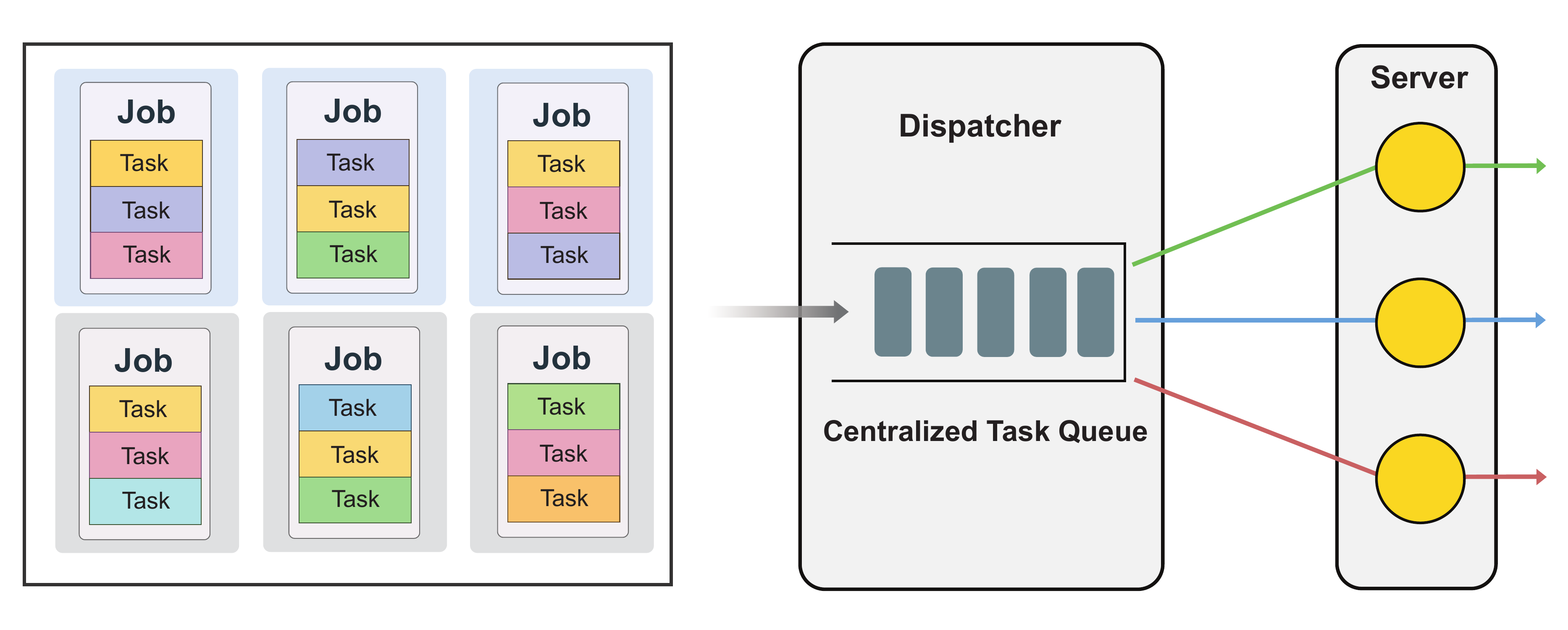}
    \caption{ \color{black}{Illustration of the multi-task multi-processor scheduling model. Jobs consist of multiple (possibly non-preemptive) tasks, which are placed into a single centralized global queue managed by the dispatcher. Servers fetch tasks from the global queue for execution, and a job is considered complete only when all its constituent tasks have finished.}}
    \label{fig:main}
\end{figure*}

However, traditional scheduling problems have evolved significantly from single-task models to complex multitask scenarios. In the contemporary landscape of computational resources, the efficient scheduling of tasks across multiple processors has emerged as a critical challenge, driven by the exponential growth of data and the complexity of applications. To give the simplest example, for the objective of computing matrix vector product, we can divide matrix elements and vector elements into groups of columns and rows respectively, then the tasks correspond to the block-wise multiplication operations. Tasks can also be map, shuffle and reduce procedures in MapReduce framework \cite{wang2016maptask}.

\emph{Multi-processor Multitask Scheduling }(MPMS) \cite{srikant_wang}, as shown in Figure \ref{fig:main}, where a job is considered to be completed only when all the tasks within the job are finished, addresses the allocation of jobs and tasks, each with potentially varying computational demands, to a set of processors in a manner that optimizes overall system performance metrics. Though much progresses have been made in single-task job scheduling, there is a lack of theoretical understanding regarding MPMS. The extension of traditional scheduling principles to multi-task systems, presents significant challenges. The heterogeneity of task durations and the dynamic nature of job arrivals seems further complicate the analysis. Thus a natural question that arises is, 

\begin{center}
\textit{How can we design an efficient scheduling algorithm to minimize the total amount time that the multitask jobs spend in the multi-processor system?}
\end{center}

The significance of MPMS problem is also underscored by its direct impact on deep learning workflow \cite{DBLP:conf/nips/HuangCBFCCLNLWC19, DBLP:conf/sosp/NarayananHPSDGG19, DBLP:conf/osdi/JiangZLYCG20}, several different parallel strategies, such as data parallelism, pipeline parallelism, etc, have been proposed to accelerate the training process. Deep learning models benefit from parallel execution during training and inference phases, necessitating effective job and task scheduling to maximize the utilization of available computational resources. Inefficient scheduling can lead to increased training times, higher operational costs, and suboptimal model performance. Therefore, developing scheduling algorithms that can handle the complexities of modern multiprocessor and multi-task systems is paramount for advancing the field of deep learning.

\subsection{Contributions.}
In this paper, we investigate how to minimize the total response time of multitask jobs in a multi-server system and answer the aforementioned question. Our contributions are summarized as follows.

\begin{itemize}
\item In Section \ref{deterministicalgo}, we propose $\mathsf{NP}$-$\mathsf{SRPT}$ algorithm, for minimizing the total response time. $\mathsf{NP}$-$\mathsf{SRPT}$ algorithm achieves a competitive ratio of $\ln \alpha+\beta+1$, where $\alpha$ is the maximum-to-minimum job workload ratio, $\beta$ represents the ratio between maximum non-preemptive task workload and minimum job workload. It can be shown that no $o(\ln \alpha+\beta)$-competitive algorithm exists when the number of machines is constant. In addition, $O(\ln \alpha+\beta^{1-\varepsilon})$ is the best possible competitive ratio for the class of work-conserving algorithms.

\item Besides the worst case relative ratio above, we further prove our main result, absolute performance guarantees for $\mathsf{NP}$-$\mathsf{SRPT}$ algorithm under certain probabilistic structure on the input instances, in which the relevant and remaining workload bound established for the adversarial inputs contributes significantly to the stochastic analysis. Assuming that jobs arrive according to a Poisson process, \ie, in $\mathsf{M}$/$\mathsf{G}$/$\mathsf{N}$ system, in Section \ref{heavytraffic} we prove that the average response time incurred by $\mathsf{NP}$-$\mathsf{SRPT}$ algorithm is asymptotic optimal when load $\rho \rightarrow 1$, as long as the task size distribution has finite support. As shown in Section \ref{secdis}, the assumption of finite task workload can be removed for exponentially distributed job size, \ie, $\mathsf{M}$/$\mathsf{M}$/$\mathsf{N}$, together with other job size distributions with certain properties on the tail of the distribution, by establishing an upper bound on the busy period in $\mathsf{M}$/$\mathsf{G}$/$\mathsf{N}$.

\item {\color{black} In Section \ref{sec:unknown}, we address the practical challenge of scheduling jobs with unknown sizes. The analysis is extended to show that non-preemptive versions of policies designed for this setting, namely $\mathsf{M\text{-}Gittins}$ and $\mathsf{M\text{-}SERPT}$, are asymptotically optimal and 2-competitive, respectively, in the heavy-traffic limit for a broad class of job size distributions.}

\end{itemize}
In addition, we also further validate the optimality of $\mathsf{NP}$-$\mathsf{SRPT}$ through experiments in Section \ref{exp}, confirming the efficacy of the algorithm. 

\color{black}
\subsection{Analytical challenges in MPMS and our approach}
\label{sec:analytical_challenges}


In many respects, preemptive model are significantly more tractable due to the inherent decomposability of the model. In preemptive-resume disciplines, the highest-priority class can be modeled as a standard $\mathsf{M}/\mathsf{G}/1$ queue, while lower classes can be represented as $\mathsf{M}/\mathsf{G}/1$ queues with busy periods generated by higher-priority classes. This allows the well-developed theory of $\mathsf{M}/\mathsf{G}/1$ queues to be applied iteratively, requiring only aggregate statistics of higher-priority workloads \cite{harchol2013performance}. Such tractability sharply contrasts with non-preemptive scheduling, including the multitask problem considered in this paper, where blocking implies that high-priority jobs may be delayed by the completion of lower-priority non-preemptive tasks already in service, thereby inducing a tight coupling between all priority levels. Thus the performance of a certain class arrival depends not only on the higher-priority classes but also on the detailed instantaneous states of lower-priority jobs, including their remaining service times. This dependence eliminates the possibility of recursive analysis and necessitates richer state descriptions and more complex techniques, such as multidimensional Markov chains \cite{brandwajn2023approximation}.

In multi-processor environments with a finite number of job classes, the analysis quickly becomes intractable due to state-space explosion. To fully describe the system, one must specify the class of the job occupying each server. Even with only five servers and five priority classes, the number of possible states already exceeds one billion~\cite{brandwajn2023approximation}. In multi-task systems, this complexity intensifies, as the system state must additionally capture the internal progress of every active job. The resulting multiplication of the state space by the combinatorial number of internal task configurations leads to an acute curse of dimensionality. Our proposed $\mathsf{NP}$$\text{-}$$\mathsf{SRPT}$ algorithm, with dynamic and infinite priority classes, introduces even more complex scheduling dilemmas.

In addition, the multi-task job structure raises a critical question of priority assignment: \emph{should priority be an attribute of the entire job (job level priority) or of its individual constituent tasks (task level priority)?} A job level policy, where all tasks of a high-priority job inherit its priority, risks resource monopolization if tasks can be parallelized, potentially starving all other jobs. Conversely, a task level policy could cause a critical job to stall if one of its non-essential, low-priority tasks is indefinitely delayed. In this paper, our analysis demonstrates that assigning priority at the job level is, perhaps counter-intuitively, effective in an asymptotic sense.

It is worth mentioning that the seminal work by Leonardi and Raz \cite{leonardi1997approximating} introduced a general technique to transform any preemptive scheduling solution into a non-preemptive one, incurring an additional approximation factor of $O(\sqrt{n/m})$ for total flow time, where $n$ is the number of jobs and $m$ is the number of machines. The approximation ratio is computed by evaluating the increase in total flow time due to the transformation from a preemptive to a non-preemptive solution. However, this transformation relies on complete knowledge of the job set, \ie, total flow time in the preemptive solution, to do job classification, making it inherently offline and limiting its applicability where jobs arrive dynamically. In contrast, our proposed $\mathsf{NP\text{-}SRPT}$ algorithm extends the transformation of preemptive SRPT into a non-preemptive framework in an online-compatible manner.

\color{black}

\subsection{Related Work}
\paragraph{Single and multiple task scheduling} There has been a large literature on single-task job scheduling, with parallel developments taking place in competitive analysis and queuing theory. For example, recently Hong and Scully \cite{hong2024performance} showed Gittins's heavy-traffic optimality in $\mathsf{G}/\mathsf{G}/\mathsf{N}$. However, little is known about multitask scheduling. Scully et. al~\cite{scully2017optimally} presented the first theoretical analysis of \emph{single-processor multitask scheduling} problem, and gave an optimal policy that is easy to compute for batch arrival, together with the assumption that the processing time of tasks satisfies the aged Pareto distributions. Sun et al.~\cite{sun2017near} studied the multitask scheduling problem when all the tasks are of unit size, and proved that among causal and non-preemptive policies, fewest unassigned tasks first (FUT) policy, earliest due date first (EDD) policy, and first come first serve (FCFS) are near delay-optimal in distribution (stochastic ordering) for minimizing the metric of average delay, maximum lateness and maximum delay respectively. Wang et. al \cite{wang2019delay} established results of asymptotic independence of queues for multitask parallel jobs, by developing a new technique named Poisson oversampling. To model the scenario when the scheduler has incomplete information about the job size, Scully et. al~\cite{scully2018optimal} introduced the multistage job model and proposed an optimal scheduling algorithm for multistage job scheduling in $\mathsf{M}$/$\mathsf{G}$/$1$ queue. The closed-form expression of the mean response time is also given for the optimal scheduler. In addition to the aforementioned work, there are also studies that further extend the understanding of scheduling by considering parallelizable jobs represented as Directed Acyclic Graphs (DAGs) \cite{agrawal2016scheduling}. 

\color{black}

\paragraph{Advances in non-preemptive scheduling} Non-preemptive scheduling, characterized by the irrevocable assignment of jobs to uninterrupted execution slots, encounters profound analytical challenges stemming from state-space explosion and the absence of flexible interruption mechanisms, as evidenced in both adversarial and stochastic settings. These difficulties manifest in stringent lower bounds, such as $\Omega(n^{1/2-\varepsilon})$-inapproximability for minimizing total flow-time even on a single machine ~\cite{kellerer1996approximability} and $\Omega(n^{1/3-\varepsilon})$ on parallel machines \cite{leonardi1997approximating}. Several algorithmic results \cite{im2014dynamic, lucarelli2016online, lucarelli2018online, lucarelli2021online} circumvent such barriers through relaxed models like resource augmentation, where algorithms access faster processors than adversaries, and rejection, permitting the discard of a small job fraction.

Non-preemptive multiserver priority queues are considered in queueing theory, which has resisted general exact analysis except in highly constrained regimes, and has led to a long line of special-case solutions and approximations. Exact results are obtained only in very restricted cases, such as identical service-time distributions across classes \cite{Davis1966} or systems with only two priority levels \cite{Williams1980, kim2021non, zuk2024explicit}. Similarly, restrictions to two priority levels have enabled precise computations of performance metrics like waiting times and queue lengths, with recent advancements deriving closed-form expressions via recurrence relations and contour integrals for equal service rates \cite{zuk2024explicit}. When service times differ, results like those of Gail et al. \cite{gail1988analysis} suffer from numerical instability as the number of servers grows. Several approaches exploited structured state-space methods, e.g., quasi-birth-and-death formulations\cite{kao1990computing} or matrix-geometric solutions \cite{reddy1993nonpreemptive}, but these remain feasible only for very small systems. To extend applicability, approximation approaches have emerged for complex arrival patterns, such as quasi-Poisson processes with multiple priority classes \cite{brandwajn2023approximation}.

Applications in modern computing further illustrate progress, practical extensions in cloud environments have adapted non-preemptive models for virtual machine scheduling, leveraging knapsack-based algorithms to balance throughput and delay \cite{psychas2017non}. Investigations into capacity rationing in multiserver nonpreemptive priority queues, where low-priority access is restricted unless idle servers exceed a threshold, provide asymptotic insights for large-scale systems with closely matched supply and demand \cite{baron2025capacity}.

All these contributions advance performance analysis, but also underscore the need for methods in non-preemptive setting to handle diverse service distributions and priority structures.

\color{black}

\paragraph{Performance and optimality of SRPT and its variants} While SRPT is optimal at minimizing average response time in single-server systems \cite{smith1978new}, its performance is suboptimal in multi-server environments. However, SRPT remains a highly regarded method in the context of multi-server systems. It has shown that the competitive ratio of SRPT is in logarithmic order and is the best possible \cite{leonardi1997approximating}. With the stochastic arrivals in $\mathsf{M}$/$\mathsf{G}$/$\mathsf{N}$ systems, SRPT is shown to be optimal in heavy traffic scenarios \cite{grosof2018srpt}.
Another notable contribution by \cite{im2016competitively} introduces the Intermediate-SRPT algorithm, which addresses jobs with intermediate parallelizability and establishes an optimal $O(\log P)$-competitive ratio concerning average flow time, where $P$ represents the ratio of maximum to minimum job sizes.

\subsection{Paper organization} The remainder of this paper is organized as following. We introduce the problem definition, notations and necessary background in Section \ref{modelpre}. In Section \ref{deterministicalgo} we formally present $\mathsf{NP}$-$\mathsf{SRPT}$ algorithm, together with the analysis of its competitive ratio and lower bounds. Section \ref{heavytraffic} is devoted to the proof of the asymptotic optimality of $\mathsf{NP}$-$\mathsf{SRPT}$ in heavy traffic regime, and the optimality is extended to infinite task size in Section \ref{secdis}. In Section \ref{sec:unknown}, the analysis is extended to the setting
of unknown job sizes. We conduct the experimental validation in Section \ref{exp}, and conclude our work in Section \ref{seccon}.

\section{Model and preliminaries}\label{modelpre}
\paragraph{Deterministic Model}
We are given a set $\mathcal{J}=\{\mathsf{J}_{1}, \mathsf{J}_{2}, \ldots, \mathsf{J}_{n}\}$ of $n$ jobs arriving online over time, together with a set of $N$ identical machines. Job $i$ consists of $n_{i}$ tasks and its workload $p_{i}$ is equal to the total summation of the processing time of tasks, \ie, $p_{i}=\sum_{\ell \in [n_{i}]}{p_{i, \ell}}$, where $p_{i, \ell}$ represents the processing time of the $\ell$-th task of job $i$. Tasks can be either preemptive or non-preemptive. A task is non-preemptive if it is not allowed to interrupt the task once it starts service, \ie, the task is run to completion. All the information of job $i$ is unknown to the algorithm until its releasing date $r_{i}$. Under any given scheduling algorithm, the completion time of job $j$ under the algorithm, denoted by $C_{j}$, is equal to the maximum completion time of individual tasks within the job. Formally, let $C^{(\ell)}_{j}$ be the completion time of task $\ell$ in job $j$, then $C_{j}=\max_{\ell\in [n_{i}]}{C^{(\ell)}_{j}}$. The response time of job $j$ is defined as $F_{j}=C_{j}-r_{j}$, our objective is to minimize the total response time $\sum_{j\in [n]}{F_{j}}$. 

Throughout the paper we use $\alpha=\max_{i\in [n]}{p_{i}}/\min_{i\in [n]}{p_{i}}$ to denote the ratio of the maximum to the minimum job workload. Let $\eta=\max\{p_{i, \ell}| \mbox{ task } \ell \mbox{ of job } i \mbox{ is non-preemptive}\}$ be the maximum processing time of a non-preemptive task, $\beta=\eta/\min_{i\in [n]}{p_{i}}$ be the ratio between $\eta$ and minimum job workload. In some sense, parameters $\beta$ and $\eta$ represent the degree of non-preemptivity and exhibits a trade-off between the preemptive and non-preemptive setting. The problem approaches the preemptive case when $\eta$ is small, and degenerates to the non-preemptive case if all the jobs are consisted of a single non-preemptive task, in which $\eta$ reaches the maximum value of $\max_{i\in [n]}{p_{i}}$.

\begin{definition}[Work-conserving scheduling algorithm] A scheduling algorithm $\pi$ is called work-conserving if it fills maximally when there exists at least one feasible job or task awaiting the execution in the system. Here a job or task is called feasible, if it satisfies all the given constraints of the system (e.g, preemptive and non-preemptive constraint, precedence constraint, etc).
\end{definition}

\begin{definition}[Competitive ratio] The competitive ratio of online algorithm $\mathcal{A}$ refers to the worst ratio of the cost incurred by $\mathcal{A}$ and that of optimal offline algorithm $\mathcal{A}^{*}$ over all input instances $\omega$ in $\Omega$, \ie,
\begin{align*}
\mathcal{CR}_{\mathcal{A}}=\max_{\omega \in \Omega}	\frac{\mathrm{Cost}_{\mathcal{A}}(\omega)}{\mathrm{Cost}_{\mathcal{A^{*}}}(\omega)}.
\end{align*}
In the multiple-processor multitask scheduling problem, the cost is the total response time under instance $\omega=\{(r_{i}, \{p_{i, \ell}\}_{\ell \in [n_{i}]})\}_{i\in [n]}$.	
\end{definition}

\paragraph{Stochastic Model}
In the stochastic setting, we assume that jobs arrive into the system according to a Poisson process with rate $\lambda$. Job processing times are i.i.d distributed with probability density function $f(\cdot)$. Formally,  we consider a sequence of $\mathsf{M}$/$\mathsf{G}$/$\mathsf{N}$ queues indexed by $k$, the traffic intensity of the $k$-th system is equal to $\rho^{(k)}=\lambda^{(k)}\cdot \mathbbm{E}[p^{(k)}_{i}]$, where $\lambda^{(k)}$ denotes the arrival rate of the $k$-th Poisson arrival process, job workload distribution has a density function of $f^{(k)}(\cdot)$. Stability of the queuing systems requires that $\rho^{(k)}<1$ for $\forall k$. As standard in the literature, we assume that $\rho^{(k)}\rightarrow1$ when $k\rightarrow \infty$. In this paper, we further assume that the probability density function $f^{(k)}(\cdot)$ is continuous. For notational convenience, we will suppress index $n$ whenever it is clear from the context. 

The stochastic analysis in this paper relies heavily on the concept of busy period in $\mathsf{M}$/$\mathsf{G}$/$1$, which is defined as following.

\begin{definition}[Busy Period in $\mathsf{M}$/$\mathsf{G}$/$1$~\cite{harchol2013performance}]
Busy period in $\mathsf{M}$/$\mathsf{G}$/$1$ is defined to be the longest time interval in which no machines are idle. 
\end{definition}
\noindent We use $\mathsf{B}(w)$ to denote the expected length of a busy period in $\mathsf{M}$/$\mathsf{G}$/$1$, started by a workload of $w$.
It can be seen that $\mathsf{B}(\cdot)$ is an additive function~\cite{harchol2013performance}, \ie, $\mathsf{B}(w_{1}+w_{2})=\mathsf{B}(w_{1})+\mathsf{B}(w_{2})$ for any independent initial workload $w_{1}, w_{2}$, 
since a busy period in $\mathsf{M}$/$\mathsf{G}$/$1$ with initial workload of $w_{1}+w_{2}$ can be regarded as a busy period started by initial workload $w_{2}$, following a busy period started by initial workload $w_{1}$. Moreover, for $\mathsf{M}$/$\mathsf{G}$/$1$ queue, the length of a busy period with initial workload of $w$ and load $\rho$ is shown to be equal to $\mathsf{B}(w)=\mathbbm{E}[w]/(1-\rho)$ \cite{harchol2013performance}.

\paragraph{SOAP policy} SOAP (Schedule Ordered by Age-based Priority) \cite{DBLP:conf/sigmetrics/ScullyHS18} is a comprehensive framework that enables the analysis of a vast array of scheduling policies, encompassing both well-established methods like FCFS and SRPT, as well as innovative variants with dynamic rank functions, which is formally defined as following.

\begin{definition}[SOAP Policy \cite{DBLP:conf/sigmetrics/ScullyHS18}]
A SOAP scheduling policy is a preemptive priority policy that utilizes a job's type (descriptor $d$) and its age $a$ (the amount of time it has been served in the system) to determine its priority. Policy $\pi$ employs a rank function $r^{\pi}(d, a)$ that assigns a numerical value (rank) to each job based on its type and age. The job with the lowest rank is always selected for service. 
\end{definition}
A SOAP policy is monotonic if its rank function is monotonic in age $a$. In this paper, a tie-breaking rule, First-Come First-Served (FCFS) is applied to determine the order of service in the event of a tie. {\color{black}{We use \textbf{$p$} to denote the specific descriptor $d$ for our size-based ploicy.}}

\section{$\mathsf{NP}$-$\mathsf{SRPT}$ Algorithm and Competitive Ratio Analysis}\label{deterministicalgo}

In the realm of scheduling, non-preemptive policies is required when tasks must be completed without interruption once started. To address this constraint, we introduce \emph{Non-Preemptive-$\pi$} ($\mathsf{NP}$-$\pi$) policy, which slightly modifies the rank function of any given SOAP policy $\pi$ to account for non-preemptive tasks.

\begin{definition}[$\mathsf{NP}$-$\pi$ policy]
For any SOAP policy $\pi$, the rank function of $\mathsf{NP}$-$\pi$ policy is,
\begin{align*}
&r^{\mathsf{NP}\text{-}{\pi}}(d, a) \\
=& r^{\pi}(d, a)\cdot \mathbbm{1}(\nexists \text{ started but unfinished non-preemptive task}).
\end{align*}
\end{definition}
This rank function adjusts the priority of a job based on whether the current non-preemptive task has been completed. If a non-preemptive task is still ongoing, the rank remains unaltered to be zero; otherwise, the rank is adjusted according to the original SOAP policy $\pi$. 

To operationalize the $\mathsf{NP}$-$\pi$ policy, we propose the $\mathsf{NP}$-$\mathsf{SRPT}$  algorithm, an adaptation of SRPT algorithm that respects non-preemptive constraints. 
The rank function of $\mathsf{NP}$-$\mathsf{SRPT}$ policy is as following:
\begin{align*}
&r^{\mathsf{NP\text{-}SRPT}}(p, a) \\
= &(p-a) \cdot \mathbbm{1}(\nexists \text{ started but unfinished non-preemptive task}),
\end{align*}
where $p$ and $a$ denotes the total workload and age of the job respectively. At each time slot $t$, jobs with non-preemptive task are kept processing on the machines, while the remaining machines are used to process jobs with smallest remaining workload. The main idea of $\mathsf{NP}$-$\mathsf{SRPT}$  is similar to SRPT, \ie, we utilize as many resources as possible on the job with smallest remaining workload, to reduce the number of alive jobs in a greedy manner, while satisfying the non-preemptive constraint.

\subsection{A general upper bound on relevant work for $\mathsf{NP}$-$\pi$ policy}
In the following sections, we will delve into the details of the analysis of $\mathsf{NP}$-$\mathsf{SRPT}$ algorithm, elucidating its behavior and performance guarantees. Central to the analysis are Lemma \ref{generallemma} and Lemma \ref{remaining_workload}, which compare the algorithm’s performance under non-preemptive constraints with optimal preemptive algorithm. Before we present these lemmas, however, it is necessary to introduce and define three important concepts, 
\emph{relevant $y$-work, system relevant $y$-work} and \emph{old job age cut off}.

\begin{definition}[Relevant $y$-work]
For a job $\mathsf{J}$ with rank no more than $y$ and SOAP policy $\pi$, its relevant $y$-work refers to the quantity of service required by a job until it enters a state with a rank of at least $y$ or completes, \ie, 
\color{black}
\begin{align*}
&z^{\pi}_{\mathsf{J}}(y, t)\\
= &[\sup \{a\in [0, p_{\mathsf{J}}]|r^{\pi}(p_{\mathsf{J}}, a) \leq y\} - a_{\mathsf{J}}(t)]\cdot \mathbbm{1}({r^{\pi}_{\mathsf{J}}(p_{\mathsf{J}}, a_{\mathsf{J}}(t))\leq y}),
\end{align*}
where $a_{\mathsf{J}}(t)$ denotes the age of job $\mathsf{J}$ at time $t$.
\color{black}
\end{definition}
We can see that, for monotone decreasing SOAP policy $\pi$, such as SRPT, the relevant $y$-work of a job is the amount of service it needs to complete the job.

System relevant $y$-work extends the concept of relevant $y$-work to the entire system, which is defined as the total of the remaining $y$-work across all jobs currently in the system:
\begin{align*}
\mathsf{Rel}^{\pi}_{\leq y}(t) = \sum_{\mathsf{J}}{z^{\pi}_{\mathsf{J}}(y, t)}.
\end{align*}
The concept of old job age cut off is a threshold used to determine the maximum age of a job that can be considered for scheduling. Jobs that exceed this age cut off are deemed irrelevant and may be subject to lower priorities. 
\begin{definition}[Old job age cut off]
For any monotone non-decreasing policy $\pi$, we let
\begin{align*}
z^{\pi}_{\mathsf{J}}(y) = \sup \{a\geq 0|r^{\pi}(p_{\mathsf{J}}, a) \leq y\},
\end{align*}
and for monotone decreasing policy $\pi$, we let 
\begin{align*}
z^{\pi}_{\mathsf{J}}(y) = \sup \{a\geq 0|r^{\pi}(p_{\mathsf{J}}, p_{\mathsf{J}}-a) \leq y\}.
\end{align*}
\end{definition}
Here we extend the definition of old job age cutoff in \cite{scully2021optimal} for allowing monotone decreasing policy. It can be seen that 
\begin{align*}
z^{\pi}_{\mathsf{J}}(y, t) \leq z^{\pi}_{\mathsf{J}}(y)
\end{align*}
holds for all monotone policy $\pi$.
\begin{lemma}\label{generallemma}
For any monotone SOAP policy $\pi$ without considering the non-preemptive constraint, the system $y$-work under policy $\mathsf{NP}$-$\pi$ satisfies that
\begin{align*}
\mathsf{Rel}^{\mathsf{NP}\text{-}\pi}_{\leq y}(t)-\mathsf{Rel}^{\pi^{*}_{1,N}}_{\leq y}(t)\leq (N-1)\cdot (\max_{\mathsf{J}}z^{\pi}_{\mathsf{J}}({y})+\eta), \forall r,t \geq 0,
\end{align*}
where $\pi^{*}_{1,N}$ represents the optimal algorithm in single server with speed $N$. 
\end{lemma}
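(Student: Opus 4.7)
The plan is to study the evolution of
\[
D(t) := \mathsf{Rel}^{\mathsf{NP}-\pi}_{\leq y}(t) - \mathsf{Rel}^{\pi^{*}_{1,N}}_{\leq y}(t)
\]
along a coupled sample path where both policies face identical arrivals. Since arrivals contribute identically to the relevant $y$-work of both systems, $D$ can change only as a consequence of service. Let $L(t) \in \{0,1,\ldots,N\}$ be the number of machines that $\mathsf{NP}$-$\pi$ is currently devoting to a task of some job whose current rank is at most $y$; then $\mathsf{Rel}^{\mathsf{NP}-\pi}_{\leq y}$ decreases at rate $L(t)$, while $\mathsf{Rel}^{\pi^{*}_{1,N}}_{\leq y}$ decreases at the full rate $N$ whenever it is positive. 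Hence $D$ is non-increasing whenever $L(t) = N$ or the benchmark has no relevant $y$-work, and $D$ can grow only when $L(t) < N$ and $\pi^{*}_{1,N}$ still holds some relevant $y$-work.

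The core of the argument is to characterize the configurations in which $L(t) < N$ and to quantify the trapped work in each. There are two reasons for $L(t) < N$: case (a), the number of jobs in $\mathsf{NP}$-$\pi$ whose current rank is $\leq y$ is at most $N-1$, so no additional machine can be profitably assigned to relevant work; or case (b), $\mathsf{NP}$-$\pi$ has at least $N$ relevant jobs but some machines are locked on non-preemptive tasks whose host jobs have crossed above rank $y$ mid-service. In case (a), the old-job-age cutoff yields $z^{\pi}_{\mathrm{J}}(y,t) \leq z^{\pi}_{y}$ for each of the at most $N-1$ relevant jobs, so $\mathsf{Rel}^{\mathsf{NP}-\pi}_{\leq y}(t) \leq (N-1) z^{\pi}_{y}$ outright, giving $D(t) \leq (N-1) z^{\pi}_{y}$ in this case. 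In case (b), work-conservation of $\mathsf{NP}$-$\pi$ together with the availability of $\geq N$ relevant jobs guarantees that at most $N-1$ machines are simultaneously blocked; each blocked task has residual processing time at most $\eta$, so the accumulated excess $N - L(s)$ integrated over the portion of time spent in case (b) is at most $(N-1)\eta$.

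To combine the two contributions into the single bound $(N-1)(z^{\pi}_y + \eta)$, I will partition $[0,t]$ into maximal intervals of constant case and track arrivals and case transitions. Case (a) contributes a standing upper bound of $(N-1) z^{\pi}_y$ on $\mathsf{Rel}^{\mathsf{NP}-\pi}_{\leq y}$ itself, while case (b) contributes an additional $(N-1)\eta$ of ``residual block work'' sitting on top of whatever case (a) has already left. Arrivals cancel in $D$ by the coupling, and a transition from (a) to (b) happens precisely when the $N$-th relevant job appears, at which moment the (a)-surplus is already capped by $(N-1) z^{\pi}_y$.

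The main obstacle I anticipate is the rigorous proof that at most $N-1$ machines are simultaneously blocked by non-preemptive irrelevant tasks while case (b) is active. The subtlety is that a transient state with all $N$ machines blocked can arise if several jobs cross rank $y$ at essentially the same instant, but such a state dissipates within time $\eta$: as soon as any block completes, the freed machine is immediately reassigned to one of the $\geq N$ relevant jobs, and the accumulated deficit from that transient is absorbed into the $(N-1)\eta$ budget. A careful telescoping argument, tracing machine states through the events of ``block completion'' and ``rank crossing $y$,'' should formalize this and yield the claimed bound.
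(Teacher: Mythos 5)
Your high-level plan — couple the two systems, study the drift of $D(t) = \mathsf{Rel}^{\mathsf{NP}-\pi}_{\leq y}(t) - \mathsf{Rel}^{\pi^{*}_{1,N}}_{\leq y}(t)$, and partition time into ``few relevant jobs'' and ``many relevant jobs'' regimes — is essentially the paper's strategy. But two of the concrete steps in your case analysis do not hold, and the second rests on a misunderstanding of what ``relevant'' means under $\mathsf{NP}$-$\pi$.

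First, in case (a) you assert $z^{\mathsf{NP}-\pi}_{\mathrm{J}}(y,t) \leq z^{\pi}_{y}$ for each of the $\leq N-1$ relevant jobs, giving $\mathsf{Rel}^{\mathsf{NP}-\pi}_{\leq y}(t) \leq (N-1)z^\pi_y$. This is false. The inequality $z^{\pi}_{\mathrm{J}}(y,t) \leq z^{\pi}_{y}$ holds for \emph{monotone} policies, but $\mathsf{NP}$-$\pi$ is not monotone: its rank function is forced to $0$ whenever a non-preemptive task is in progress. A job $\mathsf{J}$ in the middle of a non-preemptive task has $\mathsf{NP}$-$\pi$ rank $0$, so it is counted as relevant regardless of its $\pi$-rank, and its relevant $y$-work can be as large as the remaining duration of that task plus the post-task $\pi$-relevant work — i.e.\ up to $z^\pi_y + \eta$. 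That extra $\eta$ is already present while the system is in case (a); it is not deferred to case (b). The paper bounds each such job's contribution by $z^\pi_y+\eta$ (splitting on whether the $\pi$-rank stays $\leq y$ after the task), and this is where the $\eta$ in the final bound actually comes from.

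Second, your case (b) — at least $N$ relevant jobs but $L(t) < N$ because some machines are ``blocked on non-preemptive tasks whose host jobs have crossed above rank $y$'' — does not occur. A job with an ongoing non-preemptive task has $\mathsf{NP}$-$\pi$ rank exactly $0 \leq y$; it cannot have ``crossed above rank $y$'' under the rank function that defines $\mathsf{Rel}^{\mathsf{NP}-\pi}_{\leq y}$, and the machine serving its task is therefore actively draining relevant $y$-work. Together with work conservation, this means that whenever $\geq N$ jobs have $\mathsf{NP}$-$\pi$ rank $\leq y$, all $N$ machines serve relevant jobs and $\mathsf{Rel}^{\mathsf{NP}-\pi}_{\leq y}$ decreases at rate $N$, at least matching the benchmark, so $D$ is nonincreasing on such intervals. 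The paper uses exactly this observation, and the bound on $D$ then propagates from the last few-jobs instant. Consequently, the entire ``at most $N-1$ machines simultaneously blocked / telescoping through block-completion events'' machinery you flag as the main obstacle is chasing a scenario that does not exist, while the genuine $\eta$-contribution in case (a) has been dropped. The fix is simply to move the $(N-1)\eta$ into case (a) and replace case (b) with the drift argument above.
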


\begin{proof}
We analyze the following two cases: 
\begin{itemize}
\item \emph{few-jobs time intervals}, in which there are $N-1$ or fewer jobs of rank no more than $y$. Consider one of such job $\mathsf{J}$, if one of the non-preemptive tasks of this job is under processing, then 
\begin{align*}
z^{\mathsf{NP}\text{-}\pi}_{\mathsf{J}}(y,t)\leq \max_{\mathsf{J}}z^{\pi}_{\mathsf{J}}({y}) + \eta,
\end{align*}
if the rank of this job does not exceed $y$ after completing the non-preemptive task,
otherwise we have
\begin{align*}
z^{\mathsf{NP}\text{-}\pi}_{\mathsf{J}}(y,t)\leq  \eta.
\end{align*}
If this job does not have a non-preemptive task under processing, then we have $z^{\mathsf{NP}\text{-}\pi}_{\mathsf{J}}(y,t)\leq \max_{\mathsf{J}}z^{\pi}_{\mathsf{J}}({y})$. 

Hence for time slots $t$ in few jobs intervals,
\begin{align*}
&\mathsf{Rel}^{\mathsf{NP}\text{-}\pi}_{\leq y}(t)-\mathsf{Rel}^{\pi^{*}_{1,N}}_{\leq y}(t)\\
\leq &\mathsf{Rel}^{\mathsf{NP}\text{-}\pi}_{\leq y}(t)\leq (N-1)\cdot (\max_{\mathsf{J}}z^{\pi}_{\mathsf{J}}({y})+\eta).
\end{align*}
\item \emph{many-jobs time intervals}, in which all $N$ servers are serving jobs of rank no more than $y$. Note that all $N$ servers are working towards reducing the value of $\mathsf{Rel}^{\mathsf{NP}\text{-}\pi}_{\leq y}(t)$, while the two systems experience the same arrival sequence. Therefore, in the many interval period, the rate of decrease of $\mathsf{Rel}^{\mathsf{NP}\text{-}\pi}_{\leq y}(t)$ in the $N$-server system is no less than the rate of decrease of $\mathsf{Rel}^{\pi^{*}_{1,N}}_{\leq y}(t)$ in single server system. Consequently, we have the following:
\begin{align*}
&\mathsf{Rel}^{\mathsf{NP}\text{-}\pi}_{\leq y}(t)-\mathsf{Rel}^{\pi^{*}_{1,N}}_{\leq y}(t)\\
\leq &\mathsf{Rel}^{\mathsf{NP}\text{-}\pi}_{\leq y}(t^{\dag})-\mathsf{Rel}^{\pi^{*}_{1,N}}_{\leq y}(t^{\dag})\\
\leq &(N-1)\cdot (\max_{\mathsf{J}}z^{\pi}_{\mathsf{J}}({y})+\eta),
\end{align*}
where $t^{\dag}$ represents the moment just before the start of the many job interval in which $t$ is located, specifically the last moment of the few job interval preceding $t$.
\end{itemize}
\end{proof}

\subsection{Performance Analysis of $\mathsf{NP}$-$\mathsf{SRPT}$}
Having introduced the bound on the difference of relevant work under a general $\mathsf{NP}$-$\pi$ policy and optimal, we now proceed to discuss the bound on the remaining workload.

\begin{lemma}\label{remaining_workload}
For any $y\geq 0$, let $W^{\pi}_{\leq y}(t)$ represent the total remaining workload at time $t$ of jobs under policy $\pi$ whose remaining workload does not exceed $y$, then
\begin{align*}
\mathsf{W}^{ \mathsf{NP\text{-}SRPT}}_{\leq y}(t) - \mathsf{W}^{\pi^{*}_{1,N}}_{\leq y}(t) \leq N\cdot (y+\eta).
\end{align*}
\end{lemma}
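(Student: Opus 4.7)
The plan is to derive Lemma \ref{remaining_workload} as a corollary of Lemma \ref{generallemma} applied with the choice $\pi=\mathsf{SRPT}$, for which the age cutoff satisfies $z^{\mathsf{SRPT}}_{y}=y$. The bridge is a pointwise comparison of $\mathsf{W}_{\leq y}$ against $\mathsf{Rel}_{\leq y}$ for both policies involved.

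First I would observe that the optimal scheduler on a single server of speed $N$ can be realized by $\mathsf{SRPT}$, which is monotone decreasing with rank equal to the remaining workload. Thus the set of jobs with rank at most $y$ coincides exactly with the set of jobs whose remaining workload is at most $y$, and each such job's relevant $y$-work equals its remaining workload. This yields the identity $\mathsf{Rel}^{\pi^{*}_{1,N}}_{\leq y}(t) = \mathsf{W}^{\pi^{*}_{1,N}}_{\leq y}(t)$.

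Second, I would argue $\mathsf{W}^{\mathsf{NP-SRPT}}_{\leq y}(t) \leq \mathsf{Rel}^{\mathsf{NP-SRPT}}_{\leq y}(t)$ by tracing the rank evolution under $\mathsf{NP}$-$\mathsf{SRPT}$. A job with remaining workload at most $y$ has rank either equal to its remaining workload (no in-progress non-preemptive task) or equal to $0$ (non-preemptive task active); in both cases its rank is at most $y$ and will remain so until completion, since the remaining workload can only decrease and the rank is either $0$ or the remaining workload. Hence such a job contributes its entire remaining workload to $\mathsf{Rel}^{\mathsf{NP-SRPT}}_{\leq y}$ as its relevant $y$-work. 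Meanwhile, any job whose remaining workload exceeds $y$ but has an in-progress non-preemptive task may additionally contribute a nonnegative amount to $\mathsf{Rel}$ while contributing $0$ to $\mathsf{W}$. Summing over all jobs yields the claimed inequality.

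Combining the two observations and invoking Lemma \ref{generallemma} gives
\[
\mathsf{W}^{\mathsf{NP-SRPT}}_{\leq y}(t)-\mathsf{W}^{\pi^{*}_{1,N}}_{\leq y}(t) \leq \mathsf{Rel}^{\mathsf{NP-SRPT}}_{\leq y}(t)-\mathsf{Rel}^{\pi^{*}_{1,N}}_{\leq y}(t) \leq (N-1)(z^{\mathsf{SRPT}}_{y}+\eta) \leq N(z^{\mathsf{SRPT}}_{y}+\eta),
\]
establishing the claim. The main obstacle is the careful verification in the second step, since the rank function of $\mathsf{NP}$-$\mathsf{SRPT}$ depends on a state-dependent indicator (whether a non-preemptive task is in progress) and the definition of relevant $y$-work must therefore be interpreted along the actual trajectory of the policy. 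The delicate point is confirming that for every small-remaining-workload job the rank stays at most $y$ until completion, so that its whole remaining workload is counted as relevant $y$-work and the $\mathsf{W}$-to-$\mathsf{Rel}$ comparison becomes clean.
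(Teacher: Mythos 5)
Your proof is correct and follows the paper's overall plan: instantiate Lemma \ref{generallemma} with $\pi=\mathsf{SRPT}$ so that $z^{\mathsf{SRPT}}_y = y$, then translate the $\mathsf{Rel}_{\leq y}$ gap into a $\mathsf{W}_{\leq y}$ gap. The one difference is in that translation step. The paper decomposes $\mathsf{Rel}^{\pi}_{\leq y} = \mathsf{W}^{\pi}_{\leq y} + \mathsf{V}^{\pi}_{\leq y}$, where $\mathsf{V}$ collects the relevant work of rank-$0$ jobs, drops the nonnegative $\mathsf{V}^{\mathsf{NP-SRPT}}_{\leq y}$ term, and separately bounds $\mathsf{V}^{\pi^{*}_{1,N}}_{\leq y} \leq y + \eta$, which is how it passes from the $(N-1)$ factor of Lemma \ref{generallemma} to the stated $N$. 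You instead observe directly that $\mathsf{W}^{\pi^{*}_{1,N}}_{\leq y} = \mathsf{Rel}^{\pi^{*}_{1,N}}_{\leq y}$ (the preemptive $\mathsf{SRPT}_{1,N}$ benchmark has rank equal to remaining workload and therefore never a rank-$0$ job in the system) together with $\mathsf{W}^{\mathsf{NP-SRPT}}_{\leq y} \leq \mathsf{Rel}^{\mathsf{NP-SRPT}}_{\leq y}$; this actually gives the slightly tighter bound $(N-1)(y+\eta)$, which you relax to $N(y+\eta)$ only to match the statement. Your key supporting observation --- that any job under $\mathsf{NP}$-$\mathsf{SRPT}$ with remaining workload at most $y$ keeps rank at most $y$ for the remainder of its lifetime (its rank is always either $0$ or the remaining workload, and the latter never increases), so its relevant $y$-work equals its full remaining workload --- is the right thing to check and is argued correctly, so this is a valid and mildly cleaner variant of the paper's argument.
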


\begin{proof}
As SRPT is a monotone decreasing SOAP policy, we have 
\begin{align*}
z^{\mathrm{SRPT}}_{\mathsf{J}}(y) = y, \forall \mathsf{J}.
\end{align*}
Hence, according to Lemma \ref{generallemma}, we further have:
\begin{align}\label{rel-np-srpt}
\mathsf{Rel}^{ \mathsf{NP\text{-}SRPT}}_{\leq y}(t)-\mathsf{Rel}^{\pi^{*}_{1,N}}_{\leq y}(t)\leq (N-1)\cdot (y+\eta).
\end{align}
Note that the relevant work can be decomposed as the sum of relevant workload of jobs with rank in $(0, y]$ (denoted as $\mathsf{W}_{\leq y}^{\pi}(t)$) and the relevant workload of jobs with rank 0 (denoted as $\mathsf{V}_{\leq y}^{\pi}(t) $). In the context of $\mathsf{NP}$-$\mathsf{SRPT}$ , it is observed that when a job's rank is greater than zero, its relevant y-work is equivalent to its remaining workload. Therefore, this decomposition can be expressed as follows:
\color{black}
\begin{align}
\mathsf{Rel}_{\leq y}^{\mathsf{NP}\text{-}\mathsf{SRPT}}(t) = \mathsf{W}_{\leq y}^{\mathsf{NP}\text{-}\mathsf{SRPT}}(t) + \mathsf{V}_{\leq y}^{\mathsf{NP}\text{-}\mathsf{SRPT}}(t). 
\end{align}
\color{black}
Consequently 
\begin{align*}
&\mathsf{W}^{ \mathsf{NP\text{-}SRPT}}_{\leq y}(t) - \mathsf{W}^{\pi^{*}_{1,N}}_{\leq y}(t)\\
= &\mathsf{Rel}^{ \mathsf{NP\text{-}SRPT}}_{\leq y}(t) - \mathsf{Rel}^{\pi^{*}_{1,N}}_{\leq y}(t) + \mathsf{V}^{\pi^{*}_{1,N}}_{\leq y}(t) - \mathsf{V}^{ \mathsf{NP\text{-}SRPT}}_{\leq y}(t)\\
\leq & \mathsf{Rel}^{ \mathsf{NP\text{-}SRPT}}_{\leq y}(t) - \mathsf{Rel}^{\pi^{*}_{1,N}}_{\leq y}(t) + \mathsf{V}^{\pi^{*}_{1,N}}_{\leq y}(t).
\end{align*}
Since under policy $\pi^{*}_{1,N}$, there is at most one job with a rank of $0$, and its relevant $y$-work does not exceed $\eta + y$. Combined with (\ref{rel-np-srpt}), the proof is complete.
\end{proof}

Building on the previous discussions, we now turn our attention to proving the performance of the $\mathsf{NP}$-$\mathsf{SRPT}$  Algorithm, specifically its competitive ratio. Utilizing the established conclusions, we will demonstrate how these bounds contribute to the algorithm's effectiveness. 

\begin{lemma}\label{num_of_job_bound}
Let $n_{\pi}(t)$ represents the number of alive jobs at time $t$ under policy $\pi$, then 
\begin{align*}
n_{\mathsf{NP\text{-}SRPT}}(t) \leq n_{\pi^{*}_{1,N}}(t)+N\cdot (\ln \alpha+\beta+1).
\end{align*}
\end{lemma}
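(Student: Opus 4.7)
The plan is to convert the workload bound of Lemma~\ref{remaining_workload} into the claimed count bound via a Stieltjes-integral identity and an integrating-factor argument, followed by a boundary analysis at the smallest workload scale.

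The key observation is the layer-cake identity $\mathsf{W}^{\pi}_{\leq y}(t) = y \cdot n^{\pi}_{\leq y}(t) - \int_{0}^{y} n^{\pi}_{\leq x}(t)\,dx$, valid for any policy $\pi$ by viewing the remaining workloads of alive jobs as a discrete measure on $(0, p_{\max}]$. Differentiating gives the Stieltjes relation $dn^{\pi}_{\leq y}(t) = y^{-1}\,d\mathsf{W}^{\pi}_{\leq y}(t)$. Taking the difference between $\mathsf{NP}$-$\mathsf{SRPT}$ and $\pi^{*}_{1,N}$, integrating over $[p_{\min}, p_{\max}]$, and applying integration by parts yields
\begin{align*}
\Delta(p_{\max}) - \Delta(p_{\min}) \;=\; \frac{E(p_{\max})}{p_{\max}} - \frac{E(p_{\min})}{p_{\min}} + \int_{p_{\min}}^{p_{\max}} \frac{E(y)}{y^{2}}\,dy,
\end{align*}
where $\Delta(y) := n^{\mathsf{NP-SRPT}}_{\leq y}(t) - n^{\pi^{*}_{1,N}}_{\leq y}(t)$ and $E(y) := \mathsf{W}^{\mathsf{NP-SRPT}}_{\leq y}(t) - \mathsf{W}^{\pi^{*}_{1,N}}_{\leq y}(t)$.

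Next, I would apply Lemma~\ref{remaining_workload}'s bound $E(y) \leq N(y+\eta)$ to the positive terms. This yields $E(p_{\max})/p_{\max} \leq N(1 + \beta/\alpha)$ and $\int_{p_{\min}}^{p_{\max}} E(y)/y^{2}\,dy \leq N(\ln\alpha + \beta - \beta/\alpha)$. The two $\beta/\alpha$ contributions cancel exactly, producing
\begin{align*}
\Delta(p_{\max}) \;\leq\; N(\ln\alpha + \beta + 1) \;+\; \Delta(p_{\min}) - \frac{E(p_{\min})}{p_{\min}}.
\end{align*}
This is precisely why the integration endpoints are $p_{\min}$ and $p_{\max}$: the lower cutoff produces the $\ln\alpha$ factor and a $\beta$ contribution that absorbs the $\eta$ term from Lemma~\ref{remaining_workload}, while the upper cutoff contributes the additive $+1$.

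The boundary contribution $\Delta(p_{\min}) - E(p_{\min})/p_{\min}$ equals $A - B$, where $A := n^{\mathsf{NP-SRPT}}_{\leq p_{\min}}(t) - \mathsf{W}^{\mathsf{NP-SRPT}}_{\leq p_{\min}}(t)/p_{\min}$ and $B$ is defined analogously for $\pi^{*}_{1,N}$; each is non-negative, since every alive job with remaining $r \leq p_{\min}$ contributes a fractional deficit $(1 - r/p_{\min}) \in [0,1]$. The main obstacle is establishing $A \leq B$: this requires a structural comparison at the small-remaining scale, exploiting that $\pi^{*}_{1,N}$'s single speed-$N$ machine concentrates processing on the smallest-remaining job (producing at most one partially-served job but with potentially a large deficit), whereas $\mathsf{NP}$-$\mathsf{SRPT}$ distributes processing across up to $N$ partially-served jobs with smaller individual deficits. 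Combining this observation with the at-most-$N$ in-progress constraint of $\mathsf{NP}$-$\mathsf{SRPT}$ and a direct application of Lemma~\ref{remaining_workload} at the scale $y = p_{\min}$ is where the tight constant $N(\ln\alpha + \beta + 1)$ emerges.
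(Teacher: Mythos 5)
Your layer-cake identity, the integration by parts, and the computation that the bulk integral $\int_{p_{\min}}^{p_{\max}} E(y)/y^2\,dy$ plus the $E(p_{\max})/p_{\max}$ boundary term sums to $N(\ln\alpha+\beta+1)$ are all correct, and this is essentially a hand-derived version of the WINE identity $n_\pi(t)=\int_0^\infty \mathsf{Rel}^\pi_{\leq r}(t)/r^2\,dr$ that the paper cites and applies. However, your argument has a genuine gap at the lower boundary: you need $\Delta(p_{\min})-E(p_{\min})/p_{\min}=A-B\leq 0$, and you offer only a heuristic that is in fact backwards. Under $\mathsf{NP}$-$\mathsf{SRPT}$ the work is spread over $N$ servers, so up to $N$ jobs can simultaneously sit at remaining workload below $p_{\min}$, each contributing a deficit close to $1$; meanwhile the speed-$N$ single server $\pi^{*}_{1,N}$ burns through one job at a time and typically has finished it, leaving $B$ near zero. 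Concretely, take $N=2$, two fully preemptive jobs of size $p_{\min}$ released at time $0$, nothing else. At $t=p_{\min}/2$, $\mathsf{NP}$-$\mathsf{SRPT}$ has both jobs half done, so $A=2(1-\tfrac12)=1$; the benchmark $\pi^*_{1,N}$ has already completed one job and not touched the other (remaining $=p_{\min}$, deficit $0$), so $B=0$ and $A-B=1>0$. The intuition you state — that $\pi^*_{1,N}$'s one partially-served job should carry a larger deficit — would only help if $\pi^*_{1,N}$'s served job lagged behind $\mathsf{NP}$-$\mathsf{SRPT}$'s, but the opposite is true: the faster concentrated server finishes and its deficit vanishes.

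The cheap fix is to bound $A$ absolutely rather than relative to $B$: under $\mathsf{NP}$-$\mathsf{SRPT}$, a job whose remaining workload has dropped below $p_{\min}$ can never be preempted (every arriving job has size $\geq p_{\min}$), so at most $N$ jobs can be in that state, giving $A<N$ and hence $A-B<N$ — but this only yields $n_{\mathsf{NP-SRPT}}(t)\leq n_{\pi^*_{1,N}}(t)+N(\ln\alpha+\beta+2)$, an additive $N$ worse than claimed. To recover the stated constant you should either invoke the WINE identity directly (which is an exact equality over all of $(0,\infty)$ in terms of relevant $y$-work, so there are no boundary terms to control), as the paper does, or extend your integration-by-parts argument down to $y\to 0^+$, where $\Delta$ and $E$ both vanish; the latter requires a sharper bound on $E(y)$ for $y<p_{\min}$ than the uniform $N(y+\eta)$ from Lemma~\ref{remaining_workload}, since that bound integrated against $1/y^2$ diverges at $0$.
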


\begin{proof}
In Appendix \ref{appendixcomproof} we present a self-contained proof, which shows that $n_{\mathsf{NP\text{-}SRPT}}(t) \leq n_{\pi^{*}_{1,N}}(t)+N\cdot (\ln \alpha+\beta+1)$. The main idea is to divide the jobs into different classes and compare the remaining number of jobs under $\mathsf{NP}$-$\mathsf{SRPT}$  with that under optimal algorithm $\pi^{*}_{1,N}$. For any algorithm $\pi$, at time slot $t$, we divide the unfinished jobs into $\Theta(\log \alpha)$ classes $\{\mathcal{C}_{k}(\pi, t)\}_{k\in [\log \alpha +1]}$, based on their remaining workload. Jobs with remaining workload that is no more than $2^{k}$ and larger than $2^{k-1}$ are assigned to the $k$-th class. Formally, 
\begin{align*}
\mathcal{C}_{k}(\pi,t)=\Big\{i\in [n]\;\Big|\;W_i(\pi,t)\in (2^{k-1},2^{k}]\Big\},
\end{align*}
where $W_{i}(\pi,t)$ represents the unfinished workload of job $i$ at time $t$ under policy $\pi$. We finish the proof via similar approaches to \cite{leonardi2007approximating}, which primarily hinges on the fact that jobs within the same class $\mathcal{C}_{k}(\pi,t)$ have sizes that are relatively close, differing by at most a constant factor of $2$. This allows us to bound the number of unfinished jobs by considering their remaining workloads and using Lemma \ref{remaining_workload}. 

Here we present an alternative proof that results in a tighter bound with smaller coefficients, using the \emph{WINE Identity} introduced in \cite{scully2020gittins, scully2023new}. 

\begin{lemma}[WINE Identity \cite{scully2020gittins, scully2023new}] For any scheduling policy $\pi$, 
\begin{align*}
n_{\pi} &= \int_{0}^{\infty}{\frac{\mathbbm{E}[\mathsf{Rel}^{\pi}_{\leq r}|\text{State of Jobs } \{\mathsf{J}\}^{n}_{i=1}]}{r^{2}}} \mathrm{d}r,\\
\mathbbm{E}[n_{\pi}] &= \int_{0}^{\infty}{\frac{\mathbbm{E}[\mathsf{Rel}^{\pi}_{\leq r}]}{r^{2}}} \mathrm{d}r
\end{align*}
holds for $\mathsf{G}/\mathsf{G}/\mathsf{N}$ system.
\end{lemma}
\begin{align*}
n_{ \mathsf{NP\text{-}SRPT}}(t) &\leq  \int_{0}^{\infty}{\frac{\mathbbm{E}[\mathsf{Rel}^{\pi^{*}}_{\leq r}]}{r^{2}}} \mathrm{d}r +   \int_{0}^{\infty} { \frac{N(r+\eta)}{r^{2}}}\mathrm{d}r\\
&\leq n_{\pi^{*}_{1,N}}(t) + N\int_{p_{\min}}^{p_{\max}} \Big(\frac{1}{r}+\frac{\eta}{r^{2}}\Big)\mathrm{d}r \\
&=n_{\pi^{*}_{1,N}}(t) + N\cdot \Big(\ln \alpha + \frac{\eta}{p_{\min}} -\frac{\eta}{p_{\max}}\Big)\\
& \leq n_{\pi^{*}_{1,N}}(t) + N\cdot \Big(\ln \alpha + \beta \Big).
\end{align*}
\end{proof}
Our main result is stated in the following theorem.
\begin{theorem}\label{algotheorem}
 $\mathsf{NP}$-$\mathsf{SRPT}$ Algorithm achieves a competitive ratio that is no more than 
\begin{align*}
\mathcal{CR}_{ \mathsf{NP\text{-}SRPT}}\leq \ln \alpha+\beta+1.
\end{align*}
\end{theorem}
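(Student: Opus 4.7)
The plan is to combine the pointwise bound on the number of alive jobs from Lemma \ref{num_of_job_bound} with the classical identity $\sum_{j\in[n]}F_{j}=\int_{0}^{\infty}n_{\pi}(t)\,dt$, which reduces bounding the total response time to controlling the integral of $n_{\mathsf{NP-SRPT}}(t)$. Because $\pi^{*}_{1,N}$ (the offline optimum on a single processor of speed $N$) dominates any schedule on $N$ unit-speed machines, we have $\sum_{j}F_{j}^{\pi^{*}_{1,N}}\le\mathrm{Cost}_{\mathcal{A}^{*}}(\omega)$, so it suffices to show $\sum_{j}F_{j}^{\mathsf{NP-SRPT}}\le(\ln\alpha+\beta+1)\cdot\sum_{j}F_{j}^{\pi^{*}_{1,N}}$.

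First I would integrate the pointwise inequality of Lemma \ref{num_of_job_bound} to obtain
\[
\sum_{j\in[n]}F_{j}^{\mathsf{NP-SRPT}}\;\le\;\sum_{j\in[n]}F_{j}^{\pi^{*}_{1,N}}+N(\ln\alpha+\beta+1)\cdot T,
\]
where $T$ is the Lebesgue measure of $\{t:n_{\mathsf{NP-SRPT}}(t)>0\}$. This is already the right shape, except that the residual term carries an extra factor of $N$. The work of the proof is to show that this extra $N$ can be absorbed into $\sum_{j}F_{j}^{\pi^{*}_{1,N}}$, yielding the advertised $(\ln\alpha+\beta+1)$-competitive ratio instead of $(N\ln\alpha+N\beta)$-competitive.

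The cleanest way I would execute this absorption, and the step I expect to be the main obstacle, is to redo the bound class by class rather than as a single pointwise inequality. Partition the jobs into $\Theta(\log\alpha)$ geometric workload classes $\mathcal{C}_{k}$ as in Lemma \ref{num_of_job_bound}, apply Lemma \ref{remaining_workload} at the threshold $y=2^{k}$ to derive the per-class estimate
\[
n_{k}^{\mathsf{NP-SRPT}}(t)\;\le\;n_{k}^{\pi^{*}_{1,N}}(t)+O(N)\quad\text{whenever }n_{k}^{\mathsf{NP-SRPT}}(t)>0,
\]
and integrate. On any interval where class $k$ is non-empty under $\mathsf{NP-SRPT}$, the optimum $\pi^{*}_{1,N}$ must also retain comparable class-$k$ workload (up to a non-preemptive tail of length at most $\eta$), so the constant-in-$t$ residual is absorbed into $O(1)\cdot\sum_{j\in\mathcal{C}_{k}}F_{j}^{\pi^{*}_{1,N}}$ plus an overhead of $\beta\cdot\sum_{j}F_{j}^{\pi^{*}_{1,N}}$ from the frozen non-preemptive tasks. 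Summing over the $\Theta(\log\alpha)$ classes converts the geometric stratification into the multiplicative $\ln\alpha$ factor, while the non-preemptive overhead contributes the additive $\beta$.

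The hard part is precisely this absorption: showing that, despite the additive slack of $N$ per class and per time, the optimum must simultaneously keep comparable work on the same class, so the $N$ cancels rather than multiplies. I would handle the non-preemptive complication by charging each frozen-task interval of length $\le\eta$ to the job currently executing it—this job's own $\pi^{*}_{1,N}$ flow time is at least $p_{\min}$, so the charge inflates the ratio by at most $\eta/p_{\min}=\beta$. Combining the class-wise absorption with the non-preemptive charging yields $\sum_{j}F_{j}^{\mathsf{NP-SRPT}}\le(\ln\alpha+\beta+1)\cdot\sum_{j}F_{j}^{\pi^{*}_{1,N}}\le(\ln\alpha+\beta+1)\cdot\mathrm{Cost}_{\mathcal{A}^{*}}(\omega)$, which is the claimed competitive ratio.
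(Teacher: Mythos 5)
Your first paragraph is on the right track: reducing to the single-server speed-$N$ benchmark $\pi^{*}_{1,N}$ and using $\sum_{j}F_{j}=\int_{0}^{\infty}n_{\pi}(t)\,\mathrm{d}t$ is exactly how the paper sets up the proof. You also correctly diagnose that a blind integration of Lemma~\ref{num_of_job_bound} leaves a residual $N(\ln\alpha+\beta+1)\cdot T$ with an unwanted factor of $N$. The problem is in how you try to remove it.

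Your proposed absorption has a genuine gap. The per-class estimate $n_{k}^{\mathsf{NP\text{-}SRPT}}(t)\le n_{k}^{\pi^{*}_{1,N}}(t)+O(N)$ does not follow from Lemma~\ref{remaining_workload}: that lemma bounds the \emph{cumulative} quantity $\mathsf{W}_{\le 2^{k}}$, and converting it to a per-class count comparison requires the Abel-summation argument in Appendix~\ref{appendixcomproof}, which only delivers the aggregated bound of Lemma~\ref{num_of_job_bound}, not a class-by-class one. More seriously, the claim that ``whenever $\mathsf{NP}$-$\mathsf{SRPT}$ retains class-$k$ work, $\pi^{*}_{1,N}$ must also retain comparable class-$k$ workload'' is unsupported and can fail: remaining sizes only decrease, and the speed-$N$ optimum may have already driven those jobs below the threshold $2^{k-1}$ or completed them, leaving $n_{k}^{\pi^{*}_{1,N}}(t)=0$ while $n_{k}^{\mathsf{NP\text{-}SRPT}}(t)$ is large. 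Nothing in Lemmas~\ref{generallemma}--\ref{num_of_job_bound} synchronizes the class occupancy of the two systems in time.

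The paper avoids this entirely with a different split that you did not find: partition time according to whether $n_{\mathsf{NP\text{-}SRPT}}(t)<N$ (few-job intervals) or $n_{\mathsf{NP\text{-}SRPT}}(t)\ge N$ (many-job intervals), and apply Lemma~\ref{num_of_job_bound} \emph{only on the many-job intervals}. The residual is then $(\ln\alpha+\beta)\int_{t:n\ge N}N\,\mathrm{d}t$. The crucial observation is that during many-job intervals all $N$ machines are busy, so $\int_{t:n\ge N}N\,\mathrm{d}t+\int_{t:n<N}n_{\mathsf{NP\text{-}SRPT}}(t)\,\mathrm{d}t$ is the total machine-time used by $\mathsf{NP}$-$\mathsf{SRPT}$, i.e.\ the total workload, which is itself a lower bound on the optimal total response time; and $\int_{t:n\ge N}n_{\pi^{*}_{1,N}}(t)\,\mathrm{d}t$ is another lower bound on it. Combining the two lower bounds gives $\ln\alpha+\beta+1$ directly. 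In short: the additive slack of $N$ per unit time should be charged only against the many-job time, where it becomes $(\ln\alpha+\beta)$ times the total work rather than $(\ln\alpha+\beta)N$ times the horizon.
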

\begin{proof}
Again we divide the time slots into few-job intervals and many-job intervals, the competitive ratio of $\mathsf{NP}$-$\mathsf{SRPT}$  satisfies that
\begin{align*}
&\mathcal{CR}_{\mathsf{NP\text{-}SRPT}}\\
=&\frac{\int_{t: n_{ \mathsf{NP\text{-}SRPT}}(t)<N}{n_{ \mathsf{NP\text{-}SRPT}}(t)}}{F^{\pi^{*}_{1,N}}}\\
&+\frac{\int_{t: n_{ \mathsf{NP\text{-}SRPT}}(t)\geq N}{n_{ \mathsf{NP\text{-}SRPT}}(t)}}{F^{\pi^{*}_{1,N}}}\\
\leq & \frac{\int_{t: n_{ \mathsf{NP\text{-}SRPT}}(t)\geq N}{n_{\pi^{*}}(t)}}{F^{\pi^{*}_{1,N}}} +\Big(\ln \alpha+\beta\Big)\cdot \frac{\int_{t: n_{ \mathsf{NP\text{-}SRPT}}(t)\geq N}{N}}{F^{\pi^{*}_{1,N}}}\\
&+\frac{\int_{t: n_{ \mathsf{NP\text{-}SRPT}}(t)<N}{n_{ \mathsf{NP\text{-}SRPT}}(t)}}{F^{\pi^{*}_{1,N}}}\\
\leq &\ln \alpha+\beta+1,
\end{align*}
where the first inequality follows from Lemma \ref{num_of_job_bound}, the second inequality is due to the fact that
\begin{align*}
\int_{t: n_{ \mathsf{NP\text{-}SRPT}}(t)\geq N}{N}+\int_{t: n_{ \mathsf{NP\text{-}SRPT}}(t)<N}{n_{ \mathsf{NP\text{-}SRPT}}(t)}
\end{align*}
and 
\begin{align*}
\int_{t: n_{ \mathsf{NP\text{-}SRPT}}(t)\geq N}{n_{\pi^{*}_{1,N}}(t)}
\end{align*}
are two lower bounds of minimum total response time. As $F^{\pi^{*}_{1,N}}\leq F^{\pi^{*}}$, the proof is complete.
\end{proof}

\subsection{Competitive ratio lower bound}
The following lower bounds mainly follow from the observation that, multiple-processor multitask scheduling problem generalizes the single-task job scheduling problem in both preemptive and non-preemptive settings.
\begin{fact}\label{lowerboundlemma1}
For multiple-processor multitask scheduling problem with constant number of machines, there exists no algorithm that achieves a competitive ratio of $o(\ln \alpha+\beta)$.	
\end{fact}
\begin{proof}
When $\eta=0$, the problem degenerates to preemptive setting and no algorithm can achieve a competitive ratio of $o(\ln \alpha)$ \cite{leonardi2007approximating}. When $\eta=p_{\max}$, the problem degenerates to the non-preemptive setting and $O(\beta)$ is the best possible competitive ratio if the number of machines is constant \cite{bunde2002approximating}. The proof is complete.
\end{proof}

\begin{fact}\label{lowerboundlemma2}
For multiple-processor multitask scheduling problem, the competitive ratio of any work-conserving algorithms have an competitive ratio of $\Omega(\ln \alpha+ \beta^{1-\varepsilon})$ for $\forall \varepsilon>0$.	
\end{fact}
\begin{proof}
The reasoning is similar as the proof of Proposition~\ref{lowerboundlemma1}, since work-conserving algorithms cannot achieve a competitive ratio of $o(\beta^{1-\varepsilon})$ in the non-preemptive single-task job scheduling \cite{bunde2002approximating}.
\end{proof}

\section{Asymptotic Optimality of $\mathsf{NP}$-$\mathsf{SRPT}$  with Poisson Arrival}\label{heavytraffic}
In this section we show that under mild probabilistic assumptions, $\mathsf{NP}$-$\mathsf{SRPT}$  is asymptotic optimal for minimizing the total response time in the heavy traffic regime. The result is formally stated as following.

\begin{theorem}\label{heavytrafficthm}
Let $F^{ \mathsf{NP\text{-}SRPT}}_{\rho}$ and $F^{\pi^{*}}_{\rho}$ be the response time incurred by $\mathsf{NP}$-$\mathsf{SRPT}$  and optimal algorithm respectively, when the traffic intensity is equal to $\rho$. In an $\mathsf{M}$/$\mathsf{G}$/$\mathsf{N}$ with finite task size and job size distribution satisfying $\mathbbm{E}[p^{2}(\log p)^{+}]<\infty$, $\mathsf{NP}$-$\mathsf{SRPT}$  is heavy traffic optimal, \ie,
\begin{align}\label{taskthreshold}
\lim\nolimits_{\rho \rightarrow 1}\frac{\mathbbm{E}[F^{\mathsf{NP\text{-}SRPT}}_{\rho}]}{\mathbbm{E}[F^{\pi^{*}}_{\rho}]}=1.
\end{align}

\end{theorem}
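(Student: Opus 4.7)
The plan is to sandwich $\mathsf{NP}$-$\mathsf{SRPT}$ between the true optimum $\pi^{*}$ (above) and the speed-$N$ single-server optimum $\pi^{*}_{1,N}$ (below), and then show that the additive gap to the lower benchmark is negligible against the diverging mean response time of the lower benchmark itself. The lower-bound inequality $\mathbbm{E}[F^{\pi^{*}_{1,N}}_{\rho}] \leq \mathbbm{E}[F^{\pi^{*}}_{\rho}]$ follows by a standard resource-pooling-plus-preemption relaxation: any schedule on $N$ unit-speed servers (even one respecting non-preemptivity) can be emulated on a single speed-$N$ server by infinitesimal time-sharing without any loss in completion times. Moreover $\pi^{*}_{1,N}$ is simply SRPT on M/G/$1$; under the moment condition $\mathbbm{E}[p^{2}(\log p)^{+}] < \infty$, a classical M/G/$1$ SRPT analysis (cf.~\cite{grosof2018srpt}) yields $\mathbbm{E}[F^{\pi^{*}_{1,N}}_{\rho}] \to \infty$ as $\rho \to 1$. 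It therefore suffices to prove $\mathbbm{E}[F^{\mathsf{NP-SRPT}}_{\rho}] - \mathbbm{E}[F^{\pi^{*}_{1,N}}_{\rho}] = o\bigl(\mathbbm{E}[F^{\pi^{*}_{1,N}}_{\rho}]\bigr)$.

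To bound this gap, I would apply the WINE identity in expectation together with Lemma~\ref{generallemma}. Taking expectations and subtracting gives
\begin{align*}
\mathbbm{E}[n_{\mathsf{NP-SRPT}}] - \mathbbm{E}[n_{\pi^{*}_{1,N}}] = \int_{0}^{\infty} \frac{\mathbbm{E}[\mathsf{Rel}^{\mathsf{NP-SRPT}}_{\leq r}] - \mathbbm{E}[\mathsf{Rel}^{\pi^{*}_{1,N}}_{\leq r}]}{r^{2}}\,\mathrm{d}r.
\end{align*}
Lemma~\ref{generallemma} with $\pi = \mathsf{SRPT}$ (where $z^{\pi}_{y} = y$) bounds the integrand numerator pathwise by $(N-1)(r+\eta)$, but this bound alone is not integrable near $r = 0$. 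For small $r$ I would instead use the direct estimate $\mathbbm{E}[\mathsf{Rel}^{\mathsf{NP-SRPT}}_{\leq r}]/r^{2}$ on the integrand, observing that only jobs with remaining workload at most $r$ or those currently holding a non-preemptive task contribute, and the number of such low-rank jobs is pathwise at most $N$ at any instant. The finiteness of task sizes ($\eta < \infty$) together with continuity of the density $f$ controls the small-$r$ mass, while the moment hypothesis $\mathbbm{E}[p^{2}(\log p)^{+}] < \infty$ controls the large-$r$ tail. The resulting integral is bounded by a constant $G = G(N,\eta,f)$ uniformly in $\rho$, and Little's Law converts this to $\mathbbm{E}[F^{\mathsf{NP-SRPT}}_{\rho}] - \mathbbm{E}[F^{\pi^{*}_{1,N}}_{\rho}] \leq G/\lambda$.

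Assembling the pieces,
\begin{align*}
1 \leq \frac{\mathbbm{E}[F^{\mathsf{NP-SRPT}}_{\rho}]}{\mathbbm{E}[F^{\pi^{*}}_{\rho}]} \leq \frac{\mathbbm{E}[F^{\pi^{*}_{1,N}}_{\rho}] + G/\lambda}{\mathbbm{E}[F^{\pi^{*}_{1,N}}_{\rho}]} \longrightarrow 1 \text{ as } \rho \to 1,
\end{align*}
which completes the argument. The principal obstacle is precisely the behavior of the WINE integral near $r = 0$: the deterministic proof of Lemma~\ref{num_of_job_bound} absorbs the singular $\eta/r^{2}$ term into the bounded quantity $\beta = \eta/p_{\min}$, but when the workload distribution is continuous on $(0,\infty)$ there is no $p_{\min}$ to play this role. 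Taming the singularity in the stochastic setting requires a refined estimate showing that the expected relevant mass at scale $r$ decays at rate $O(r^{2})$; I expect to extract this from a busy-period argument of the type introduced in Section~\ref{modelpre}, exploiting the fact that almost-done jobs under $\mathsf{NP}$-$\mathsf{SRPT}$ are served aggressively and therefore persist only during short sub-busy periods of expected length $O(r/(1-\rho))$. This is exactly where the finite-task-size hypothesis is essential, and its removal in Section~\ref{secdis} will demand separate tail-decay assumptions on the job-size distribution.
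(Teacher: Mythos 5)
Your high-level architecture matches the paper's: sandwich $\mathsf{NP}$-$\mathsf{SRPT}$ between the speed-$N$ single-server SRPT benchmark $\pi^{*}_{1,N}$ and the true optimum $\pi^{*}$, and show the additive gap to the lower benchmark is negligible relative to that benchmark's diverging mean. But the route you propose for bounding the gap is different from the paper's, and as sketched it has a genuine hole.

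The paper does \emph{not} push the WINE identity into the stochastic regime. It proves the gap bound in Theorem~\ref{lemma1} by the tagged-job busy-period argument of \cite{grosof2018srpt}: the response time of a tagged job of size $x$ is stochastically dominated by a busy period of a speed-$N$ M/G/1 started with initial work $\mathsf{W}_{\mathrm{waste}}(r_{x})+\mathsf{Rel}^{\mathsf{NP-SRPT}}_{\leq x}(r_{x})+x$, and Lemmas~\ref{generallemma}--\ref{remaining_workload} bound that initial work by $\mathsf{W}^{\mathrm{SRPT}_{1,N}}_{\leq x}(r_{x})+2Nx+N\eta$. Averaging and splitting the busy period via additivity yields
\begin{align*}
\mathbbm{E}[F^{\mathsf{NP-SRPT}}_{\rho}] \;\leq\; \mathbbm{E}[F^{\mathrm{SRPT}_{1,N}}_{\rho}] \;+\; O\Bigl(\log\tfrac{1}{1-\rho}\Bigr)\cdot\mathbbm{E}[\eta],
\end{align*}
so the gap \emph{does grow} with $\rho$, just sub-polynomially. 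This is then beaten by Lemma~\ref{mg1bound} (from \cite{scully2020gittins}), which uses the hypothesis $\mathbbm{E}[p^{2}(\log p)^{+}]<\infty$ to give the quantitative lower bound $\mathbbm{E}[F^{\mathrm{SRPT}_{1,1}}_{\rho}]=\omega\bigl(\log\tfrac{1}{1-\rho}\bigr)$.

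Two concrete problems with your sketch. First, you claim the WINE gap integral is bounded by a $\rho$-independent constant $G$, but the paper's own (looser, because differently obtained) bound diverges like $\log\tfrac{1}{1-\rho}$; nothing in your argument substantiates the stronger uniform bound, and your supporting claim that ``the number of such low-rank jobs is pathwise at most $N$ at any instant'' is false --- under either $\mathsf{NP}$-$\mathsf{SRPT}$ or $\pi^{*}_{1,N}$, arbitrarily many jobs may simultaneously have remaining work $\leq r$. Second, you only use that $\mathbbm{E}[F^{\pi^{*}_{1,N}}_{\rho}]\to\infty$, which would suffice if the gap were constant, but with the actual $\Theta\bigl(\log\tfrac{1}{1-\rho}\bigr)\mathbbm{E}[\eta]$ gap you need the sharper $\omega\bigl(\log\tfrac{1}{1-\rho}\bigr)$ growth of the benchmark --- this is exactly where the moment hypothesis enters in the paper, via Lemma~\ref{mg1bound}, not through tail control of a WINE integral as you suggest. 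You correctly identified the $r\to 0$ singularity in the WINE approach as the central obstacle, but the paper's resolution is to abandon WINE for the stochastic bound and use the tagged-job/busy-period machinery instead; your sketch leaves that obstacle standing.
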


The probabilistic assumptions here are with respect to the distribution of job size, \ie, the total workload of tasks. For the processing time of a single task, the only assumption we have is the finiteness of task workload. It can be seen that the optimality result in~\cite{grosof2018srpt} corresponds to a special case of Theorem \ref{heavytrafficthm}.

\color{black}

\color{black}

\subsection{Average response time bound}\label{uplowbound}
Our important step is to derive the following analytical upper bound on $\mathbbm{E}[F_{\rho}^{ \mathsf{NP\text{-}SRPT}}]$. 
\begin{theorem}\label{lemma1}
The average response time under $\mathsf{NP}$-$\mathsf{SRPT}$  satisfies that
\begin{align}\label{unrefine1}
\mathbbm{E}{[F_{\rho}^{ \mathsf{NP\text{-}SRPT}}]}\leq \mathbbm{E}{[F_{\rho}^{\mathsf{SRPT}_{1,N}}]}+O\Big( \log \frac{1}{1-\rho}\Big)\cdot \mathbbm{E}[\eta].
\end{align}
\end{theorem}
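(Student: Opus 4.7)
The plan is to combine Little's Law with the WINE identity to reduce the response-time bound to an integral of relevant-work differences, apply Lemma~\ref{generallemma} pointwise, and truncate the resulting divergent integral at heavy-traffic scales set by the M/G/1 busy period. First, Little's Law gives $\mathbbm{E}[F_\rho^\pi] = \mathbbm{E}[n_\pi]/\lambda_\rho$, so it suffices to bound the expected job-count difference $\mathbbm{E}[n_{\mathsf{NP\text{-}SRPT}}] - \mathbbm{E}[n_{\mathrm{SRPT}_{1,N}}]$. Applying the WINE identity (exactly as in the proof of Lemma~\ref{num_of_job_bound}), and using that $\mathrm{SRPT}_{1,N}$ coincides with the single-fast-server optimum $\pi^{*}_{1,N}$ appearing in Lemma~\ref{generallemma}, this difference equals
\[
\int_0^\infty \frac{\mathbbm{E}[\mathsf{Rel}^{\mathsf{NP\text{-}SRPT}}_{\leq r}] - \mathbbm{E}[\mathsf{Rel}^{\mathrm{SRPT}_{1,N}}_{\leq r}]}{r^2}\, dr,
\]
and Lemma~\ref{generallemma} with $\pi = \mathrm{SRPT}$ (so $z^{\mathrm{SRPT}}_r = r$) bounds the numerator pointwise by $(N-1)(r + \mathbbm{E}[\eta])$.

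The crux is that $(N-1)(r + \mathbbm{E}[\eta])/r^2$ is not integrable at either endpoint, so one must restrict the integration to a heavy-traffic window $[a_\rho, b_\rho]$. Motivated by the M/G/1 busy-period identity $\mathsf{B}(w) = \mathbbm{E}[w]/(1-\rho)$ recalled in Section~\ref{modelpre}, I would take $a_\rho \sim \mathbbm{E}[\eta]$ and $b_\rho \sim \mathbbm{E}[\eta]/(1-\rho)$. On this middle window direct integration gives
\[
\int_{a_\rho}^{b_\rho} \frac{(N-1)(r + \mathbbm{E}[\eta])}{r^2}\, dr = (N-1)\log\frac{b_\rho}{a_\rho} + (N-1)\mathbbm{E}[\eta]\left(\frac{1}{a_\rho} - \frac{1}{b_\rho}\right) = O\!\left(\log\frac{1}{1-\rho}\right),
\]
which after dividing by $\lambda_\rho$ and regrouping factors of $\mathbbm{E}[\eta]$ produces the additive error $O(\log\tfrac{1}{1-\rho})\cdot\mathbbm{E}[\eta]$ of Theorem~\ref{lemma1}. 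For the small-$r$ tail $r < a_\rho$, the observation that $\mathsf{Rel}^{\mathsf{NP\text{-}SRPT}}_{\leq r}(t)$ is supported by the at most $N$ jobs whose non-preemptive task is currently in service---each contributing at most $r + \eta$ to the relevant work---together with $\mathsf{Rel}^{\mathrm{SRPT}_{1,N}}_{\leq r} \geq 0$, bounds the contribution by a constant.

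The hard part will be the large-$r$ tail $r > b_\rho$: to conclude that $\int_{b_\rho}^\infty \Delta(r)/r^2\, dr$ is itself absorbable into the target bound, one must show that $\Delta(r) := \mathbbm{E}[\mathsf{Rel}^{\mathsf{NP\text{-}SRPT}}_{\leq r}] - \mathbbm{E}[\mathsf{Rel}^{\mathrm{SRPT}_{1,N}}_{\leq r}]$ decays quickly once $r \gg \mathbbm{E}[\eta]/(1-\rho)$. Here I would combine work conservation---which forces both $\mathsf{Rel}^{\pi}_{\leq r}$ to converge to the common total expected workload as $r$ grows---with the finite-task-size hypothesis of Theorem~\ref{heavytrafficthm} and the M/G/N busy-period bound derived from $\mathsf{B}$ to quantify the rate of this convergence. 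Intuitively, once $r$ exceeds the busy-period scale, the large-job portion of the workload drains at a rate dictated by $\mathbbm{E}[p]/(1-\rho)$, so the tail integral inherits the $1/(1-\rho)$ scaling already built into $b_\rho$ and contributes only a lower-order correction; making this rigorous under the moment condition $\mathbbm{E}[p^{2}(\log p)^{+}]<\infty$ is the most delicate step.
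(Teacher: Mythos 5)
Your approach via Little's Law plus the WINE identity is a genuinely different route from the paper's, but as written it has two gaps that are not merely ``delicate'' but load-bearing, and one incorrect claim. First, the small-$r$ claim is wrong: $\mathsf{Rel}^{\mathsf{NP\text{-}SRPT}}_{\leq r}(t)$ is not supported only by the $\leq N$ jobs with a non-preemptive task in service---any job with remaining workload at most $r$ and no non-preemptive task in progress has rank in $(0,r]$ and contributes as well. Even setting this aside, the pointwise bound from Lemma~\ref{generallemma} gives $\Delta(r)\leq (N-1)(r+\eta)$, and $\int_0^{a_\rho}(r+\eta)/r^2\,dr$ diverges as $r\to 0$; so the constant you invoke is not there. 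What actually controls the lower limit is that the supports of the relevant $r$-work are empty for $r$ below the minimum job size---which is exactly how Lemma~\ref{num_of_job_bound} truncates the WINE integral at $p_{\min}$---but in the stochastic model $p_{\min}$ is not bounded away from zero, so that truncation is not available to you for free. Second, the large-$r$ tail is not addressed: the pointwise bound $(N-1)(r+\eta)$ grows linearly in $r$ and yields only $O(\log)$ after dividing by $r^2$ if you have a finite upper cutoff, but you have no argument that $\Delta(r)$ actually decays once $r$ exceeds $b_\rho$. ``Work conservation forces both to converge to the common total expected workload'' is false here: the $N$-server system and the single server of speed $N$ are two different systems with the same arrivals, and their total workloads differ in steady state; so $\Delta(\infty)$ is a nonzero constant you have not bounded in terms of $\mathbbm{E}[\eta]$, and the moment condition $\mathbbm{E}[p^2(\log p)^+]<\infty$ alone does not plug this hole.

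The paper's proof avoids both issues by not going through WINE at all. It uses the tagged-job/busy-period reduction from \cite{grosof2018srpt}: the response time of a size-$x$ tagged job is stochastically dominated by a busy period $\mathsf{B}^{(\rho(x))}$ in a single fast server with relevant load $\rho(x)=\lambda\int_0^x tf(t)\,dt$, seeded by $\mathsf{W}_{\mathrm{waste}}+\mathsf{Rel}^{\mathsf{NP\text{-}SRPT}}_{\leq x}+x$. Lemmas~\ref{generallemma} and \ref{remaining_workload} bound the seed by $\mathsf{W}^{\mathrm{SRPT}_{1,N}}_{\leq x}+2Nx+N\eta$, and busy-period additivity splits off a $\Sigma_2$ term that (via PASTA) is exactly $\mathbbm{E}[F^{\mathrm{SRPT}_{1,N}}_\rho]$ and a $\Sigma_1=\mathsf{B}^{(\rho(x))}(2Nx+N\eta)$ error term. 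The $O(\log\frac{1}{1-\rho})$ factor then comes out cleanly from $\mathbbm{E}[1/(1-\rho(x))]$, with the integral split at the $\xi$ satisfying $\rho(\xi)=\rho/2$---no hand-chosen $a_\rho,b_\rho$ and no tail-decay claim needed. If you want to salvage the WINE route, you would have to replace the pointwise $(N-1)(r+\eta)$ bound by something that simultaneously vanishes as $r\to 0$ and saturates at $O(\mathbbm{E}[\eta])$ as $r\to\infty$, which effectively re-derives the relevant-busy-period estimate that the paper obtains directly.
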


\begin{proof}

{\color{black}{In this proof, we focus on the careful application of the technique in \cite{grosof2018srpt} to the complex $\mathsf{NP\text{-}SRPT}$ setting, correctly accounting for the additional delays caused by non-preemptivity and the multi-processor environment.

\noindent\emph{Proof Sketch:} Similar as~\cite{grosof2018srpt}, the core idea is that the tagged job's response time is determined by the time required for the $N$ servers to process a total volume of work, which includes the job's own workload plus all other work that takes priority. By establishing upper bounds on each component and leveraging $\mathsf{M}/\mathsf{G}/1$ busy period properties, we derive a bound on the response time. Finally, we apply the PASTA property to relate this to the average response time of an ideal $\mathsf{SRPT}$ system.}

\paragraph{Workload Decomposition for a Tagged Job and Bounding the Workload Components}

Consider a tagged job with workload $x$, arriving time $r_{x}$ and completion time $C^{\mathsf{NP\text{-}SRPT}}_{x}$. Its response time is $\mathsf{T}^{\mathsf{NP\text{-}SRPT}}_{x} = C^{\mathsf{NP\text{-}SRPT}}_{x} - r_{x}$. This duration is the time for the \(N\) servers to clear a specific volume of work. we categorize this work into four components processed during during $[r_{x}, C^{\mathsf{NP\text{-}SRPT}}_{x}]$:

}

\begin{enumerate}
\item \emph{\color{black}{Wasted service capacity.}} The system may be processing jobs with rank larger than $x$, or some machines are idle, while the tagged job is in service, because the number of jobs alive is smaller than $N$. We use $\mathsf{W}_{\mathrm{waste}}(r_{x})$ to represent the amount of such resources, then
\begin{align}\label{wastebound}
\mathsf{W}_{\mathrm{waste}}(r_{x})\leq (N-1)\cdot x,
\end{align}
which is indeed the same as Lemma $5.1$ in~\cite{grosof2018srpt}. The reason is straightforward---as the tagged job must be in service, hence the number of such time slots should not exceed $x$, and thus (\ref{wastebound}) holds.
\item \emph{\color{black}{Relevant work at arrivals.}} The servers may be dealing with system relevant $x$-work, which is of a higher priority than the tagged job. According to Lemma \ref{generallemma} and Lemma \ref{remaining_workload}, 
\begin{align}
&\mathsf{Rel}^{ \mathsf{NP\text{-}SRPT}}_{\leq x}(r_{x})\\
\leq &\mathsf{Rel}^{\mathsf{SRPT}_{1,N}}_{\leq x}(r_{x}) + (N-1)\cdot (x+\eta)\notag\\
= &\mathsf{W}^{\mathsf{SRPT}_{1,N}}_{\leq x}(r_{x}) + \mathsf{V}^{\mathsf{SRPT}_{1,N}}_{\leq x}(r_{x}) + (N-1)\cdot (x+\eta) \notag\\
\leq & \mathsf{W}^{\mathsf{SRPT}_{1,N}}_{\leq x}(r_{x}) + N \cdot (x+\eta).\label{rele_bound}
\end{align}

\item \emph{\color{black}{New relevant work}}. A newly arriving job is admitted to the system during $[r_{x},C_{x}]$, only if its size is no more than $x$. Hence we only consider relevant load $\rho(x)=\lambda \cdot \int_{0}^{x}{tf(t)dt}$.

\item \emph{\color{black}{Tagged job's workload}}. The amount of resources is equal to $x$, the size of the tagged job.
\end{enumerate}

\color{black}
\paragraph{{Constructing the Busy Period Bound}}
The response time $\mathsf{T}^{ \mathsf{NP\text{-}SRPT}}_{x}$ of the tagged job $x$, is stochastically bounded by the length of a busy period in a single server system with speed $N$, driven by the relevant arrival load \(\rho(x)\), starting with the sum of the tagged job's work, wasted capacity, and pre-existing higher-priority work, \ie, 
\color{black}
\begin{align*}
\mathsf{W}_{\mathrm{waste}}(r_{x})+\mathsf{Rel}^{ \mathsf{NP\text{-}SRPT}}_{\leq x}(r_{x})+x.
\end{align*}
Formally, we have
\begin{align*}
&\mathsf{T}^{ \mathsf{NP\text{-}SRPT}}_{x}\\
\leq_{st}&\mathsf{B}^{(\rho({x}))}\Big(\mathsf{W}_{\mathrm{waste}}(r_{x})+\mathsf{Rel}^{ \mathsf{NP\text{-}SRPT}}_{\leq x}(r_{x})+x\Big)\\
\overset{(a)}{\leq} &\mathsf{B}^{(\rho({x}))}\Big(2Nx+N\eta+\mathsf{W}^{\mathsf{SRPT}_{1,N}}_{\leq x}(r_{x})\Big)\\
\overset{(b)}{=} &\underbrace{\mathsf{B}^{(\rho({x}))}\Big(2Nx+N\eta\Big)}_{\Sigma_{1}}+ \underbrace{\mathsf{B}^{(\rho({x}))}\Big(\mathsf{W}^{\mathsf{SRPT}_{1,N}}_{\leq x}(r_{x})\Big)}_{\Sigma_{2}}.
\end{align*}
In $(a)$ we use  the upper bounds established in (\ref{wastebound}) and (\ref{rele_bound}); $(b)$ follows from the additivity of busy period in $\mathsf{M}$/$\mathsf{G}$/$1$.

\color{black}
\paragraph{{Asymptotic Analysis of the Bound}} We analyze the expectations of \(\Sigma_1\) and \(\Sigma_2\). The term \(\Sigma_1\) represents delay due to non-preemptivity and multi-server setup:
\color{black}
\begin{align}\label{ineqbound0}
\mathbbm{E}_{x,r_{x}}[\Sigma_{1}]&= O\Big(\mathbbm{E}\Big(\mathsf{B}^{(\rho(x))}(\eta+x)\Big)\Big)= O\Big(\mathbbm{E}\Big[\frac{\eta+x}{1-\rho(x)}\Big] \Big)\notag\\
&=O\Big(\log \frac{1}{1-\rho}\Big)+\mathbbm{E}[\eta]\cdot O\Big(\int_{0}^{\infty}\frac{{ f(x)}}{1-\rho(x)}\mathrm{d}x \Big).
\end{align}
{\color{black}{
Splitting the integral at \(\xi\) such that \(\rho(\xi) = \rho/2\):}}
\begin{align}\label{ineqbound}
\int_{0}^{\infty}\frac{f(x)}{1-\rho(x)}\mathrm{d}x =&\int_{0}^{\xi}\frac{f(x)}{1-\rho(x)}\mathrm{d}x + \int_{\xi}^{\infty}\frac{f(x)}{1-\rho(x)}\mathrm{d}x\notag\\
\leq &\frac{1}{1-\rho(\xi)} +\frac{1}{\xi}\cdot \int_{\xi}^{\infty}\frac{xf(x)}{1-\rho(x)}\mathrm{d}x.
\end{align}
Note that
\begin{align*}
\rho(\xi)=\lambda \cdot \int_{0}^{\xi}{tf(t)\mathrm{d}t}\leq \lambda \cdot\xi,
\end{align*}
we have $\xi \geq \mathbbm{E}[p_{i}]/2$. {\color{black}{Substituting into (\ref{ineqbound}):}}
\begin{align*}
\int_{0}^{\infty}\frac{f(x)}{1-\rho(x)}\mathrm{d}x  \leq &2+ \frac{2}{\mathbbm{E}[p_{i}]} \cdot  \int_{0}^{\infty}\frac{xf(x)}{1-\rho(x)}\mathrm{d}x\\
=& 2+ \frac{2}{\mathbbm{E}[p_{i}]} \cdot \log \frac{1}{1-\rho}.
\end{align*}
Thus, 
\begin{align*}
\mathbbm{E}_{x,r_{x}}{[\Sigma_{1}]} = O\Big( \log \frac{1}{1-\rho}\Big)\cdot \mathbbm{E}[\eta]
\end{align*}

\color{black}

{\color{black}{
The term \(\Sigma_2\) is the busy period generated by higher-priority work in \(\mathsf{SRPT}\) as seen by an arriving job. By the PASTA property \cite{wolff1982poisson}:}}
\begin{align*}
\mathbbm{E}{[F_{\rho}^{\mathrm{SRPT}_{1, N}}]}\geq  &\mathbbm{E}_{x,r_{x}}{[\mathsf{B}^{(\rho(x))}(W^{\mathrm{SRPT}_{1, N}}_{\leq x}(r_{x}))]}\\
=& \mathbbm{E}_{x,r_{x}}{[\Sigma_{2}]},
\end{align*}

\color{black}
\paragraph{Putting the Pieces Together}
Combining the expectations of \(\Sigma_1\) and \(\Sigma_2\),
\color{black}
\begin{align}\label{unrefine}
\mathbbm{E}{[F_{\rho}^{ \mathsf{NP\text{-}SRPT}}]}&=\mathbbm{E}_{x,r_{x}}{[\mathsf{T}^{ \mathsf{NP\text{-}SRPT}}_{x}]}=\mathbbm{E}_{x,r_{x}}[\Sigma_{1}]+\mathbbm{E}_{x,r_{x}}[\Sigma_{2}]\notag\\
&\leq \mathbbm{E}{[F_{\rho}^{\mathrm{SRPT}_{1, N}}]}+O\Big( \log \frac{1}{1-\rho}\Big)\cdot \mathbbm{E}[\eta].
\end{align}

This completes the proof.

\end{proof}

\subsection{Optimality of $\mathsf{NP}$-$\mathsf{SRPT}$ with finite task size}
The benchmark system we consider consists of a single machine with speed $N$, where all the tasks can be allowed to be served in preemptive fashion, \ie, the concept of task is indeed unnecessary in this setting. It is clear to see that the mean response time under optimal algorithm for this single machine system can be performed as a valid lower bound for the multitask problem, \ie, 
\begin{align}\label{ineqq1}
\mathbbm{E}{[F^{\pi^{*}}_{\rho}]}\geq \mathbbm{E}{[F^{\mathsf{SRPT}_{1,N}}_{\rho}]}.
\end{align}
We leverage the following conclusion to bound the heavy traffic growth rate of the average response time under SRPT. 


\begin{lemma}[\hspace*{-0.3em}\cite{scully2020gittins}] \label{mg1bound}
If $\mathbbm{E}[p^{2}(\log p)^{+}]<\infty$, 
\begin{align*}
\mathbbm{E}[F^{\mathrm{SRPT}_{1,1}}_{\rho}]=\omega\Big(\log \frac{1}{1-\rho}\Big).
\end{align*}
\end{lemma}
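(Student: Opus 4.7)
The plan is to work directly from the classical Schrage--Miller exact formula for the conditional mean response time of a job of size $x$ under SRPT in M/G/1:
\[
\mathbbm{E}[T^{\mathrm{SRPT}}(x)] \;=\; \int_0^x \frac{dy}{1-\rho(y)} \;+\; \frac{\lambda \int_0^x y^2 f(y)\, dy + \lambda x^2 \bar F(x)}{2(1-\rho(x))^2},
\]
where $\rho(y) = \lambda \int_0^y t f(t)\, dt$ is the load contributed by jobs of size at most $y$. The unconditional $\mathbbm{E}[F^{\mathrm{SRPT}_{1,1}}_\rho]$ is obtained by integrating against $f(x)$. The goal is to exhibit a tail contribution that dominates the bounded-support $\Theta(\log \tfrac{1}{1-\rho})$ scaling.

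First, I would lower-bound the residual-service term by changing the order of integration, giving $\int_0^\infty \bar F(y)/(1-\rho(y))\, dy$. When $p$ has bounded support this is $\Theta(\log \tfrac{1}{1-\rho})$, but when the support is unbounded the factor $1/(1-\rho(y))$ remains large for $y$ far out in the tail where $\bar F(y)$ is still nontrivial. I would then lower-bound the waiting term by keeping only the $\lambda x^2 \bar F(x)/(2(1-\rho(x))^2)$ contribution and integrating against $f(x)$. Splitting both integrals at a threshold $y^\star$ chosen so that $1 - \rho(y^\star)$ is of order $\sqrt{1-\rho}$ (or a similar matching scale) isolates the ``dangerous'' tail regime where the blow-up factor is large.

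The central technical step, and the main obstacle, is showing that the tail-regime contribution grows strictly faster than $\log \tfrac{1}{1-\rho}$ under the moment hypothesis $\mathbbm{E}[p^2 (\log p)^+] < \infty$. The hypothesis is tight in the following sense: the finiteness of $\mathbbm{E}[p^2]$ is the classical threshold for $\mathbbm{E}[F^{\mathrm{SRPT}_{1,1}}_\rho]$ to be finite at every fixed $\rho < 1$, while the extra $(\log p)^+$ weight forces just enough control on $\bar F(y)$ at infinity for an Abel-summation / integration-by-parts argument to convert the $y^2 \bar F(y)$ tail into a divergent contribution of order $\int^\infty \bar F(y)/(1-\rho(y))^2\, dy$ that asymptotically dominates $\log \tfrac{1}{1-\rho}$. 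Making this uniform in $\rho$ as $\rho \to 1$ requires careful juggling of the threshold $y^\star$ against the tail decay implied by the moment condition; this is exactly the delicate balance resolved in the Scully--Harchol-Balter analysis we cite, so rather than re-deriving the bound from scratch I would invoke it in the form stated, having motivated why such a superlogarithmic bound is the natural consequence of the Schrage--Miller formula combined with $\mathbbm{E}[p^2 (\log p)^+] < \infty$.
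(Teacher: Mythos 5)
The paper does not prove Lemma~\ref{mg1bound}; it states the result as a direct citation to \cite{scully2020gittins} with no proof attached. Your proposal ultimately does the same, deferring the ``central technical step'' to the cited reference, so at the level of what is formally established your treatment matches the paper's, and invoking the citation is the right call.

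The sketch you give on the way, however, contains claims that do not hold up and should not be read as a proof outline. The assertion that the residence-time contribution $\int_0^\infty \bar F(y)/(1-\rho(y))\,dy$ is $\Theta\bigl(\log\tfrac{1}{1-\rho}\bigr)$ for bounded support is incorrect: near the top of a bounded support both $\bar F(y)$ and $\rho-\rho(y)$ vanish at comparable linear rates, so the integrand is bounded by a constant and the integral is $O(1)$; for M/M/1 the same integral is $\Theta\bigl(\log\log\tfrac{1}{1-\rho}\bigr)$, still subdominant. The superlogarithmic growth in these regimes comes instead from the ordinary waiting-time term $\lambda\bigl(\int_0^x y^2 f(y)\,dy\bigr)/\bigl(2(1-\rho(x))^2\bigr)$, not from the far-tail $x^2\bar F(x)$ piece your sketch emphasizes. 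Relatedly, your description of the role of $\mathbbm{E}[p^2(\log p)^+]<\infty$ as ``forcing'' a divergent tail contribution has the logic reversed: this hypothesis is a finite-moment, tail-lightness condition, and in the framing of Proposition~\ref{flowboundfact}, $\mathbbm{E}[F^{\mathrm{SRPT}}]=\Theta\bigl(1/((1-\rho)G^{-1}(\rho))\bigr)$, its job is to cap how fast $G^{-1}(\rho)$ can grow as $\rho\to 1$, which is what keeps $\mathbbm{E}[F^{\mathrm{SRPT}}]$ superlogarithmic. Since you end by citing the result, your conclusion is fine; only the motivational reasoning is off, and a reader who tried to flesh out that sketch would hit dead ends.
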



\paragraph{Proof of optimality} It suffices to show that the difference between the average response time under $\mathsf{NP}$-$\mathsf{SRPT}$  and the optimal algorithm is a lower order term, \ie,
\begin{align}\label{optcondi}
\lim_{n \rightarrow \infty}\frac{\mathbbm{E}{[F_{\rho}^{ \mathsf{NP\text{-}SRPT}}]}-\mathbbm{E}[F^{\mathsf{SRPT}_{1,N}}_{\rho}]}{\mathbbm{E}[F^{\mathsf{SRPT}_{1,N}}_{\rho}]}=0,
\end{align}
which holds according to Lemma~\ref{mg1bound} and inequality (\ref{unrefine})-(\ref{ineqq1}).


\section{Optimality of $\mathsf{NP}$-$\mathsf{SRPT}$ Beyond Finite Task Size}\label{secdis}
Up to this point, we have focused on job size distributions with finite task, which is rather restrictive. It is natural to consider relaxations of this assumption.  In this section, we turn to other classes of job size distributions with unbounded support on task workload. These results provide complement to our developments about the theory of the asymptotic optimality of $\mathsf{NP}$-$\mathsf{SRPT}$ .

\subsection{Warm up--random number of tasks}

In the following proposition, we analyze the order of the expected value of the maximum task size, if 
\begin{enumerate}[label=(\alph*), ref=(\alph*)]
    \item\label{item:a}  the moment generating function of the job size distribution exists.
    \item\label{item:b} the $k$-th moment of job size is finite.
\end{enumerate}

\begin{proposition}\label{promax}
For $n$ jobs with independently distributed workload, the mean value of the maximum task size is in the order of,

\begin{equation}
    \mathbbm{E}[\eta] = \begin{cases}
        O(\log (\mathbbm{E}[n])), & \text{case \ref{item:a}} , \\
        O((\mathbbm{E}[n])^{1/k}), & \text{case \ref{item:b}}.
    \end{cases}
\end{equation}
\end{proposition}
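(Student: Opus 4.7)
The plan is to reduce the statement to a standard bound on the expected maximum of a fixed number of i.i.d.\ random variables, then absorb the randomness of the count via Jensen's inequality. Let $n$ denote the (possibly random) total number of task sizes and let $X_1, X_2, \ldots$ denote i.i.d.\ copies of the task size distribution. Then $\eta = \max_{i \in [n]} X_i$. I will condition on the realization of $n$ and bound $\mathbb{E}[\eta \mid n]$ separately in the two cases, then take an outer expectation.

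For case \ref{item:a}, I would use the classical moment-generating-function trick. Let $M(t) = \mathbb{E}[e^{tX_i}]$, which is finite for some $t_0 > 0$ by assumption. Using the inequality $\mathbb{E}[\max_{i \in [n]} X_i] \le \tfrac{1}{t_0} \log \mathbb{E}\bigl[\sum_{i=1}^n e^{t_0 X_i}\bigr]$, which follows from Jensen's inequality applied to $x \mapsto e^{t_0 x}$ and monotonicity of the logarithm, one obtains
\begin{equation*}
\mathbb{E}[\eta \mid n] \;\le\; \frac{1}{t_0}\bigl(\log n + \log M(t_0)\bigr) \;=\; O(\log n).
\end{equation*}
Taking expectations and applying Jensen's inequality to the concave function $\log(\cdot)$ yields $\mathbb{E}[\eta] \le O(\mathbb{E}[\log n]) \le O(\log \mathbb{E}[n])$, as required.

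For case \ref{item:b}, I would instead use a moment bound. Since $x \mapsto x^k$ is convex for $k \ge 1$, Jensen's gives $(\mathbb{E}[\max_i X_i])^k \le \mathbb{E}[\max_i X_i^k] \le \sum_{i=1}^n \mathbb{E}[X_i^k] = n \cdot \mathbb{E}[X^k]$, so that conditionally on $n$,
\begin{equation*}
\mathbb{E}[\eta \mid n] \;\le\; \bigl(n \cdot \mathbb{E}[X^k]\bigr)^{1/k} \;=\; O(n^{1/k}).
\end{equation*}
Since $(\cdot)^{1/k}$ is concave for $k \ge 1$, a second application of Jensen's inequality to the outer expectation yields $\mathbb{E}[\eta] \le O\bigl((\mathbb{E}[n])^{1/k}\bigr)$.

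There is no substantial obstacle here, as both cases are textbook maximal inequalities combined with a conditioning argument. The only subtle point is the direction of Jensen's inequality when the count $n$ is random: in both cases the outer function ($\log$ or $(\cdot)^{1/k}$) is concave, so Jensen's pushes the expectation \emph{inside}, which is exactly what is needed to express the bound in terms of $\mathbb{E}[n]$ rather than $\mathbb{E}[\log n]$ or $\mathbb{E}[n^{1/k}]$. The main item of care is simply to make explicit the assumption that the task sizes are independent of $n$ (or at least that conditioning on $n$ leaves the task-size distribution unchanged), so that the conditional maximal inequalities can be invoked cleanly.
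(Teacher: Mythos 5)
Your argument follows essentially the same skeleton as the paper's proof: apply Jensen's inequality to a convex transform $g$, bound the (transformed) maximum by the sum, invert $g$, and instantiate $g(x)=e^{sx}$ or $g(x)=x^k$. The difference is how the random count is absorbed. The paper applies Jensen \emph{once}, directly on $g(\eta)$, then collapses $\mathbb{E}\bigl[\sum_{i=1}^{n} g(p_i)\bigr] = \mathbb{E}[n]\,\mathbb{E}[g(p_i)]$ in a single step — this is Wald's identity, which holds whenever $n$ is a stopping time (or, more generally, whenever $\{n\geq i\}$ is independent of $p_i$). You instead condition on $n$ first, split $\mathbb{E}\bigl[\sum_{i=1}^n g(X_i)\,\big|\,n\bigr] = n\,\mathbb{E}[g(X)]$ inside the conditional, and then undo the conditioning with a second (outer) Jensen on the concave inverse. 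That inner split is where your flagged assumption bites: it requires $\mathbb{E}[g(X_i)\mid n]=\mathbb{E}[g(X_i)]$, i.e.\ genuine independence of $n$ from the task sizes. In the applications that follow in the paper (Theorem~\ref{mm1opt} and Section~\ref{secdis}), $n$ is $n_{\mathrm{busy}}$, the number of arrivals during a busy period, which is a stopping time but is \emph{not} independent of the job sizes — a single large job stretches the busy period and pulls in more arrivals. So your proof as written does not cover the paper's intended use, whereas the Wald's-identity route does. The fix is small: do not split the conditional sum. Keep $\mathbb{E}[\eta\mid n] \le g^{-1}\bigl(\mathbb{E}\bigl[\sum_i g(X_i)\mid n\bigr]\bigr)$, push the outer expectation through the concave $g^{-1}$ by Jensen to get $\mathbb{E}[\eta]\le g^{-1}\bigl(\mathbb{E}\bigl[\sum_{i=1}^n g(X_i)\bigr]\bigr)$, and only then invoke Wald's identity. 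At that point the conditioning was doing no work, which is why the paper's one-shot version is cleaner.
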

\begin{proof}
Consider an increasing convex function $g(x)$, by Jensen's inequality, we have
\begin{align*}
g(\mathbbm{E}[\eta])\leq \mathbbm{E}[g(\eta)] \leq \mathbbm{E}[\sum_{i=1}^{n}g(p_{i})] = \mathbbm{E}[n]\mathbbm{E}[g(p_{i})].
\end{align*}
Since the $g^{-1}(\cdot)$ is also increasing, hence
\begin{align*}
\mathbbm{E}[\eta]  \leq g^{-1}(\mathbbm{E}[n]\mathbbm{E}[g(p_{i})]).
\end{align*}
Let $s>0$ and 
\begin{equation*}
    g(x) = \begin{cases}
        e^{sx}, & \text{case \ref{item:a}} , \\
        x^{k}, & \text{case \ref{item:b}}.
    \end{cases}
\end{equation*}
then we have
\begin{equation}
    \mathbbm{E}[\eta] \leq \begin{cases}
        \min_{s\in D(s)}\frac{\log (\mathbbm{E}[n]) + \log(\mathbbm{E}[e^{sp_{i}}])}{s} , & \text{case \ref{item:a}} , \\
        (\mathbbm{E}[n])^{1/k}\cdot \mathbbm{E}^{1/k}[p^{k}_{i}], & \text{case \ref{item:b}}.
    \end{cases}
\end{equation}
where $D(s)=\{s>0|\mathbbm{E}[e^{sp_{i}}]<\infty\}$. The proof is complete.
\end{proof}

\subsection{Optimality in $\mathsf{M}$/$\mathsf{M}$/$\mathsf{1}$}

For single server with Poisson arrival and 
exponentially distributed workload, we claim that $\mathsf{NP}$-$\mathsf{SRPT}$  is asymptotic optimal without any additional assumptions. 
\begin{theorem}\label{mm1opt}
 $\mathsf{NP}$-$\mathsf{SRPT}$ is asymptotic optimal in $\mathsf{M}$/$\mathsf{M}$/$\mathsf{1}$.
\end{theorem}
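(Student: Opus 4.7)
The plan is to mirror the structure of Theorem \ref{heavytrafficthm}, but exploit two features specific to M/M/1 that rescue the proof when task sizes are unbounded. First, for $N=1$ there is at most one non-preemptive task in progress at any moment, so the quantity $\eta$ that appears in Lemmas \ref{generallemma}--\ref{remaining_workload} and in the initial workload of the busy-period coupling of Theorem \ref{lemma1} can be replaced by the size of the single task being processed at the tagged job's arrival time $r_x$. Second, combining PASTA with the length-biased (inspection-paradox) formula, this ongoing task has expected size $\rho \cdot \mathbbm{E}[p^2]/\mathbbm{E}[p] = 2\rho/\mu = O(1)$, a constant independent of $\rho$ by the exponential assumption.

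Concretely, I would revisit the derivation of (\ref{unrefine}) with this refinement. For $N=1$ the waste term $\mathsf{W}_{\mathrm{waste}}(r_x)$ vanishes, and the relevant-work bound (\ref{rele_bound}) can be tightened to
\begin{align*}
\mathsf{Rel}^{\mathsf{NP\text{-}SRPT}}_{\leq x}(r_x) \leq \mathsf{W}^{\mathrm{SRPT}_{1,1}}_{\leq x}(r_x) + x + \tilde{\eta}(r_x),
\end{align*}
where $\tilde{\eta}(r_x)$ denotes the size of the non-preemptive task currently in service (and equals $0$ if none). Taking expectation via $\mathbbm{E}[\mathsf{B}^{(\rho(x))}(w)] = \mathbbm{E}[w]/(1-\rho(x))$, using $\mathbbm{E}[\tilde{\eta}(r_x)] = O(1)$ from the paragraph above, and following the same integrations that led from (\ref{ineqbound0}) to (\ref{unrefine}) in the proof of Theorem \ref{lemma1}, yields
\begin{align*}
\mathbbm{E}[F^{\mathsf{NP\text{-}SRPT}}_{\rho}] \leq \mathbbm{E}[F^{\mathrm{SRPT}_{1,1}}_{\rho}] + O\Big(\log \tfrac{1}{1-\rho}\Big),
\end{align*}
because $\mathbbm{E}[x/(1-\rho(x))]$ and $\mathbbm{E}[1/(1-\rho(x))]$ are both $O(\log(1/(1-\rho)))$ for exponential $p$.

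Since the exponential distribution has all moments finite, $\mathbbm{E}[p^{2}(\log p)^{+}] < \infty$, and Lemma \ref{mg1bound} gives $\mathbbm{E}[F^{\mathrm{SRPT}_{1,1}}_{\rho}] = \omega(\log(1/(1-\rho)))$. Hence the additive correction above is strictly lower order. Combining with the universal benchmark $\mathbbm{E}[F^{\pi^{*}}_{\rho}] \geq \mathbbm{E}[F^{\mathrm{SRPT}_{1,1}}_{\rho}]$ from (\ref{ineqq1}) closes the argument:
\begin{align*}
\frac{\mathbbm{E}[F^{\mathsf{NP\text{-}SRPT}}_{\rho}]}{\mathbbm{E}[F^{\pi^{*}}_{\rho}]} \leq 1 + \frac{O(\log(1/(1-\rho)))}{\omega(\log(1/(1-\rho)))} \xrightarrow[\rho \to 1]{} 1.
\end{align*}

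The main obstacle is justifying the replacement $N\eta \leadsto \tilde{\eta}(r_x)$ inside Lemmas \ref{generallemma}--\ref{remaining_workload}. Those statements are phrased with the global supremum $\eta$, but a pointwise re-examination of the few-jobs-interval argument shows that the slack at time $t$ really is bounded by the size of the task being served non-preemptively at time $t$ itself, not by the worst-case task ever encountered. Once this pointwise version is in hand, the combination of PASTA and the length-biased expectation of an exponential service time delivers the $O(1)$ bound used above, and the rest of the argument is an exact reuse of the integrations already developed in Section \ref{uplowbound}.
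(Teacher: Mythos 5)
Your proof takes a genuinely different route from the paper. The paper proves Theorem~\ref{mm1opt} by showing $\mathbbm{E}[\eta]=O(\log(1/(1-\rho)))$: it views $\eta$ as the maximum non-preemptive task size encountered within a busy period, invokes Proposition~\ref{pro2} to bound the expected number of arrivals in an M/M/1 busy period by $O(1/(1-\rho))$, and then applies Proposition~\ref{promax} (the MGF/Jensen bound) to get a logarithmic expected maximum. Plugging this into Theorem~\ref{lemma1} yields an additive excess of $O(\log^2(1/(1-\rho)))$, which is negligible against $\mathbbm{E}[F^{\mathrm{SRPT}_{1,1}}_\rho]=\Theta\big(\tfrac{1}{1-\rho}\log\tfrac{1}{1-\rho}\big)$ by the cited M/M/1 SRPT result. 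Your approach instead tries to avoid the busy-period maximum entirely by replacing the global $\eta$ with the pointwise quantity $\tilde{\eta}(r_x)$ and then bounding $\mathbbm{E}[\tilde\eta(r_x)]=O(1)$ via PASTA and a length-biasing argument. That is a conceptually cleaner route and, if it worked, would give a sharper $O(\log(1/(1-\rho)))$ excess; it also avoids relying on Proposition~\ref{pro2}.

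However, there is a gap in the step you flag as ``the main obstacle.'' The few-jobs/many-jobs decomposition in Lemma~\ref{generallemma} bounds the gap $\mathsf{Rel}^{\mathsf{NP\text{-}SRPT}}_{\leq x}(t)-\mathsf{Rel}^{\pi^*}_{\leq x}(t)$ at time $t$ by propagating an inequality from $t^\dagger$, the last moment before $t$ at which the system had no rank-$\leq x$ jobs. The slack that enters at $t^\dagger$ is governed by the non-preemptive task that begins to run at $t^\dagger$ (a job of rank $>x$ jumping to rank $0$), and that task need not still be in service at $r_x$: it may complete well before $r_x$, with the server then switching to freshly arrived small jobs while the accumulated lag relative to the benchmark persists. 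So the pointwise bound is $\Delta(r_x)\leq s_{\text{task started at } t^\dagger}$, not $\Delta(r_x)\leq \tilde{\eta}(r_x)$. Since $t^\dagger$ is a stopping time defined by the sample path rather than a Poisson arrival epoch, PASTA does not apply to it, and the length-biased $O(1)$ estimate you compute for $\tilde{\eta}(r_x)$ does not directly control $\mathbbm{E}[s_{\text{task at } t^\dagger}]$. To make your route rigorous you would need either a coupling showing the task at $t^\dagger$ is stochastically dominated by one whose expectation you can bound, or a different pointwise decomposition of $\mathsf{Rel}^{\mathsf{NP\text{-}SRPT}}_{\leq x}(r_x)-\mathsf{W}^{\mathrm{SRPT}_{1,1}}_{\leq x}(r_x)$ that isolates only the $\mathsf{V}$-contribution (which \emph{is} controlled by $\tilde{\eta}(r_x)$) from the $\mathsf{W}$-contribution (which carries the history dependence). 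As written, the proof does not close this gap.
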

We first introduce the following propositions that will be used in our proof.

\begin{proposition}[\hspace*{-0.6em}\cite{bansal2018achievable}]\label{pro2}
For $\mathsf{M}$/$\mathsf{M}$/$\mathsf{1}$ model and any work-conserving algorithm, let $n_{\mathrm{busy}}$ be the number of arrivals in a busy period, then 
\begin{align*}
\mathbbm{E}[n_{\mathrm{busy}}]=O\Big(\frac{1}{1-\rho}\Big).   
\end{align*}
\end{proposition}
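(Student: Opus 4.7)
The statement is the classical M/G/1 identity $\mathbb{E}[n_{\mathrm{busy}}] = 1/(1-\rho)$ specialized to M/M/1, so the plan is to recover this closed form and read off the $O(1/(1-\rho))$ bound. My approach has three ingredients: policy-invariance of the busy period under any work-conserving discipline, the busy-period length formula already recorded in the preliminaries, and a one-line conditioning argument using the Poisson arrival process.

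First I would observe that for a single-server, work-conserving queue the total unfinished work process $W(t)$ satisfies $W(t) = \sum_{i: r_i \le t} p_i - \int_0^t \mathbbm{1}[W(u)>0]\,\mathrm{d}u$, which depends only on the arrival epochs and service requirements, not on the scheduler. Consequently the set of busy intervals, their lengths, and the arrivals they contain are invariant across all work-conserving policies. It therefore suffices to analyze a single convenient policy (e.g.\ FCFS) in the standard M/M/1 framework.

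Next I would compute the mean busy-period length. By the additivity of $\mathsf{B}(\cdot)$ and the formula $\mathsf{B}(w) = \mathbb{E}[w]/(1-\rho)$ recorded in Section~\ref{modelpre}, a busy period initiated by a single job is a busy period with initial workload $\mathbb{E}[p] = 1/\mu$, giving $\mathbb{E}[B] = 1/(\mu(1-\rho))$. Then, conditioning on $B$ and using the fact that arrivals form a Poisson process of rate $\lambda$, the expected arrivals that occur during the busy period is $\lambda\,\mathbb{E}[B] = \rho/(1-\rho)$. Counting the single arrival that initiated the busy period gives
\begin{align*}
\mathbb{E}[n_{\mathrm{busy}}] \;=\; 1 + \lambda\,\mathbb{E}[B] \;=\; 1 + \frac{\rho}{1-\rho} \;=\; \frac{1}{1-\rho},
\end{align*}
which is $O(1/(1-\rho))$. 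If preferred, the same answer emerges from a branching-process argument: each customer generates a Poisson$(\lambda p)$ number of ``offspring'' during its service, the mean offspring count is $\rho<1$, and the total progeny of one founder has mean $\sum_{k\ge 0}\rho^k = 1/(1-\rho)$.

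The only subtlety I anticipate is justifying that the number of arrivals is indeed scheduler-independent; a careless reading of ``busy period'' under a non-FCFS discipline could suggest otherwise. I would address this up front via the sample-path work-process identity above, which cleanly severs the busy-interval structure from any scheduling choice and lets the rest of the argument proceed as in the standard textbook derivation.
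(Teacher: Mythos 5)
Your derivation is correct, but note that the paper itself offers no proof of this proposition: it is imported verbatim as a citation to the literature and used as a black box in the proof of Theorem~\ref{mm1opt}. Your argument is the standard textbook one and all three ingredients are sound: the sample-path workload identity does make the busy-interval structure (and hence the arrival count per busy period) invariant over work-conserving single-server policies; the busy-period mean $\mathbbm{E}[B]=\mathbbm{E}[p]/(1-\rho)$ follows from the formula already recorded in Section~\ref{modelpre}; and $1+\lambda\mathbbm{E}[B]=1/(1-\rho)$ is the right count. The one step you state a bit too casually is $\mathbbm{E}[N(B)]=\lambda\,\mathbbm{E}[B]$: since $B$ is not independent of the arrival process, ``conditioning on $B$'' does not by itself yield Poisson$(\lambda B)$ arrivals. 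You should justify this via Wald's identity or optional stopping applied to the martingale $N(t)-\lambda t$ at the stopping time $B$ (which has finite mean for $\rho<1$), or simply lead with your branching-process argument, which is fully rigorous on its own: mean offspring $\rho<1$ per customer gives expected total progeny $\sum_{k\ge 0}\rho^{k}=1/(1-\rho)$. With that repair, your proof is a complete, self-contained substitute for the citation, and it in fact delivers the exact identity $\mathbbm{E}[n_{\mathrm{busy}}]=1/(1-\rho)$ rather than only the $O(1/(1-\rho))$ bound the paper needs.
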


\begin{proofof}{Theorem \ref{mm1opt}}
Firstly, it has been shown in \cite{DBLP:journals/orl/Bansal05} that, the average response time under SRPT in $\mathsf{M}$/$\mathsf{M}$/$\mathsf{1}$ satisfies
\begin{align*}
\frac{\mathbbm{E}[F^{\mathrm{SRPT-1}}_{\rho}]}{\mu(1-\rho)\log (1/(1-\rho))}\in \Big[\frac{1}{18e}, 7\Big]
\end{align*}
for $\rho \in [2/3,1)$. Based on Proposition \ref{promax} and Proposition \ref{pro2}, we have
\begin{align*}
\mathbbm{E}[\eta] =O(\log(\mathbbm{E}[n_{\mathrm{busy}}])) = O\Big(\log \frac{1}{1-\rho}\Big).  
\end{align*}
Combined with Theorem \ref{lemma1}, this implies that
\begin{align*}
\mathbbm{E}{[F_{\rho}^{ \mathsf{NP\text{-}SRPT}}]}- \mathbbm{E}{[F_{\rho}^{\mathsf{SRPT}_{1,N}}]} \leq O\Big(\log^{2} \frac{1}{1-\rho}\Big), 
\end{align*}
which is a lower order term compared to $\mathbbm{E}[F^{\mathrm{SRPT-1}}_{\rho}]$.
\end{proofof}

\subsection{Beyond Exponential Job Size Distribution}


In the study of $\mathsf{NP}$-$\mathsf{SRPT}$, it is crucial to analyze the expected length of busy periods, which could help bound $\mathbbm{E}[\eta]$. Notably, previous works have established insights into the analysis of busy periods in $\mathsf{M}$/$\mathsf{M}$/$\mathsf{N}$ queuing systems\cite{omahen1978analysis, artalejo2001analysis}. In this section, we extend the analysis in \cite{omahen1978analysis} to $\mathsf{M}$/$\mathsf{G}$/$\mathsf{N}$ and prove an upper bound on the expected length of the busy period under $\mathsf{NP}$-$\mathsf{SRPT}$, with the following assumption. 

\begin{assumption}\label{residualass}
For any $a\geq 0$, there exist constants $r_{\max}$ such that  the expected size of the residual job, \ie, the amount of work remaining after a certain point in time or after a certain amount of work has been completed, satisfies 
\begin{align*}
\mathbbm{E}[p-a|p>a]\leq r_{\max}.
\end{align*}
\end{assumption}

In simpler terms, this assumption states that regardless of the current finished workload, the expected size of the remaining work will always be no more than $r_{\max}$, ie, being finite. It actually generalizes the concept of job size characterized by the exponential and \emph{new better than used} (NBUE) distribution.

\begin{lemma}
Under Algorithm $\mathsf{NP}$-$\mathsf{SRPT}$  and Assumption \ref{residualass}, 
\begin{align*}
\mathbbm{E}[\eta] = \begin{cases}
O(\log (1/(1-\rho))), &\text{case \ref{item:a}}\\
O((1-\rho)^{-\frac{1}{k-1}}), &\text{case \ref{item:b}}
\end{cases}  
\end{align*}
\end{lemma}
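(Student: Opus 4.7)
The plan is to reduce the bound on $\mathbbm{E}[\eta]$ to Proposition \ref{promax} applied to the random population $n_{\mathrm{busy}}$ of jobs appearing in the current busy period, exactly in the spirit of the proof of Theorem \ref{mm1opt}. Writing $\mathsf{B}$ for the length of a busy period, Poisson arrivals and Wald's identity give $\mathbbm{E}[n_{\mathrm{busy}}] = \lambda \cdot \mathbbm{E}[\mathsf{B}]$, so the whole question reduces to proving $\mathbbm{E}[\mathsf{B}] = O(1/(1-\rho))$ up to an $\epsilon$ slack in case \ref{item:b}, which is the replacement for Proposition \ref{pro2} in the M/G/N setting.

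The heart of the argument is therefore to extend the M/M/c busy-period analysis of \cite{omahen1978analysis} to M/G/N under Assumption \ref{residualass}. The role of the assumption is to substitute for the missing memorylessness: no matter the age $a$ of an in-service job, its expected residual work is bounded by $r_{\max}$. Consequently, whenever all $N$ servers are busy the instantaneous expected drain rate of the total workload is at least $N/r_{\max}$, while workload arrives at rate $\lambda\mathbbm{E}[p]=\rho$. A Foster-Lyapunov drift on total residual workload, or equivalently a stochastic coupling between the M/G/N busy period and an M/G/1 busy period with effective service rate $N/r_{\max}$, then yields $\mathbbm{E}[\mathsf{B}] = O(1/(1-\rho))$. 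Plugging this into Proposition \ref{promax} immediately gives $\mathbbm{E}[\eta]=O(\log (1/(1-\rho)))$ in case \ref{item:a} and $\mathbbm{E}[\eta]=O((1-\rho)^{-1/k})$ in case \ref{item:b}. The additional $\epsilon$ in case \ref{item:b} is exactly the slack needed to pass from "number of jobs is the random $n_{\mathrm{busy}}$" to "number of jobs is $\mathbbm{E}[n_{\mathrm{busy}}]$" in the max-of-$n$ bound: splitting the tail at $t = (1-\rho)^{-1-\epsilon k}$ and applying Markov's inequality $\mathsf{Pr}(n_{\mathrm{busy}}\geq t)\leq \mathbbm{E}[n_{\mathrm{busy}}]/t$ loses a factor of $(1-\rho)^{-\epsilon}$ on the exponent after taking the $k$-th root.

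The main obstacle is the drain-rate step in the drift argument. Unlike M/M/c, in M/G/N the residual distributions of the $N$ in-service jobs are correlated with their ages and with the $\mathsf{NP}$-$\mathsf{SRPT}$ priority rule, and the non-preemptive constraint can block a server for a full task-worth of additional time. I would handle this by working with the two-dimensional state consisting of the queue length and the vector of ages of jobs currently in service, choosing a Lyapunov function that is linear in total residual workload so that Assumption \ref{residualass} can be applied term-by-term. The additive $O(N(x+\eta))$ discrepancy between relevant work under $\mathsf{NP}$-$\mathsf{SRPT}$ and under the single fast server established in Lemma \ref{generallemma} and Lemma \ref{remaining_workload} is already of the correct order and absorbs the non-preemptivity at the cost of an $O(\mathbbm{E}[\eta])$ term that feeds back into an induction on $1/(1-\rho)$; the $\epsilon$ in case \ref{item:b} provides exactly the breathing room required for that self-bounding step to close. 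Once the drift inequality is established, the rest is a standard hitting-time computation for positive recurrent birth-death-like processes.
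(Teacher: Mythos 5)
Your high-level reduction---bound $\mathbbm{E}[\eta]$ via Proposition~\ref{promax} applied to $n_{\mathrm{busy}}$, relate $\mathbbm{E}[n_{\mathrm{busy}}]$ to $\mathbbm{E}[\mathsf{B}]$ by Wald, then bound the M/G/N busy period using Assumption~\ref{residualass}---is the paper's plan, and the intuition that the assumption substitutes for memorylessness is correct. The gap is in the step you yourself call the heart of the argument. Your drift claim rests on the assertion that when all $N$ servers are busy the workload drains at rate at least $N/r_{\max}$; but workload drains at exactly the number of busy servers, full stop---Assumption~\ref{residualass} bounds expected \emph{residual size} of an in-service job, not a drain rate. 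More to the point, a Lyapunov function on total residual workload has no negative drift in exactly the regime that is hard here: few jobs in the system, some servers idle or blocked on a long non-preemptive task. Neither the drift nor the loosely described coupling to a speed-$N/r_{\max}$ M/G/1 controls this partial-busy-period regime, so the claimed $\mathbbm{E}[\mathsf{B}] = O(1/(1-\rho))$ is unsupported.

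The paper closes this regime differently: it classifies full busy periods by the number $k \le N$ of occupied servers, and uses Assumption~\ref{residualass} together with Markov's inequality to show that whenever a server is occupied, a departure precedes the next arrival with probability at least $\tfrac{1}{2}e^{-2\lambda r_{\max}}$, a constant bounded away from zero. This yields a geometric recursion $L_k(p_{\max}) \le c + c'\, L_{k+1}(p_{\max})$ for the worst-case expected full-busy-period lengths, hence $L_k(p_{\max}) = O(L_N(p_{\max})) = O(p_{\max}/(1-\rho))$ by domination with a speed-$N$ single-server busy period started from backlog $N p_{\max}$. Because $p_{\max}$ is itself the random maximum task size $\eta$, taking expectation produces the self-referential inequality $\mathbbm{E}[\mathsf{B}] = O(\log\mathbbm{E}[\mathsf{B}]/(1-\rho))$, which solves to $\mathbbm{E}[\mathsf{B}] = O((1-\rho)^{-1-\epsilon})$. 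This feedback through $\mathbbm{E}[\eta]$---not a Markov split between the random $n_{\mathrm{busy}}$ and its mean---is what produces the $\epsilon$ in case~\ref{item:b} (in case~\ref{item:a} the logarithm absorbs the exponent). You correctly sense a self-bounding step in your last paragraph but misattribute both where it occurs and what causes it; to repair your route you would need to replace the drift step with something that genuinely controls the partial-busy-period regime, for instance the paper's recursion on $L_k$.
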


\begin{proof}
To analyze the busy periods under the $\mathsf{NP}$-$\mathsf{SRPT}$  policy, we classify two types of busy periods in an $N$-server system:
\begin{itemize}
\item \emph{Full busy period}. A time interval during which all $k \leq N$ servers are occupied. It starts when a new arrival finds $k-1$ customers already in the system and ends at the first departure epoch when exactly $k-1$ customers remain in the system. The length of such a busy period is denoted by $T_k$.
\item \emph{Partial busy period}. A time interval during which at least one server is busy. It begins when a new arrival finds the system empty and ends at the first epoch when the system becomes empty again.
\end{itemize}
Conditioning on whether a departure or a new arrival occurs first, the expected length of a full busy period can be expressed as:
\begin{align}\label{condition_pro}
&\mathbbm{E}[T_{k}|\text{start with} \{\mathsf{J}_{i}\}_{i=1}^{k}, p_{\mathsf{J}_{i}}\leq p_{\max}]\notag\\
=&\mathbbm{P}(\text{departure first})\cdot \mathbbm{E}[\text{first departure time}|\text{departure first}]\notag\\
&+ \mathbbm{P}(\text{arrival first})\cdot \mathbbm{E}[T_{k+1}|\text{start with} \{\mathsf{J}^{\prime}_{i}\}_{i=1}^{k}\cup\{\mathsf{J}_{k+1}\}]\notag\\
&+ \mathbbm{P}(\text{arrival first})\cdot \mathbbm{E}[T_{k}|\text{start with} \{\mathsf{J}^{\prime\prime}_{i}\}_{i=1}^{k}\}],
\end{align}
for any initial job collection $\{\mathsf{J}_{i}\}_{i=1}^{k}$.

Since we consider an $\mathsf{M}$/$\mathsf{G}$ system, where job arrivals follow a Poisson process, the probability that an arrival occurs before a departure is:
\begin{align*}
&\mathbbm{P}(\text{arrival first})\\
=&\int_{0}^{\infty}\mathbbm{P}(\text{minimum remaining workload}\\
&\text{of jobs under processing}\geq s)\cdot \lambda e^{-\lambda s} \mathrm{d}s\\
\leq & \int_{0}^{2r_{\max}} \lambda e^{-\lambda s} \mathrm{d}s + \frac{1}{2}\int_{2r_{\max}}^{\infty}{\lambda e^{-\lambda s} \mathrm{d}s}\\
= & 1-\frac{1}{2}e^{-2\lambda r_{\max}} \leq 1-\frac{1}{2}e^{-\frac{2r_{\max}^{2}}{N}},
\end{align*}
where the second inequality follows from Markov inequality and Assumption \ref{residualass}.

Now, define the worst-case expected length of a full busy period starting with $k$ jobs and job sizes bounded by $p_{\max}$:
\begin{align*}
L_{k}(p_{\max}) = \sup_{\{\mathsf{J}_{i}\}_{i=1}^{k}} \mathbbm{E}[T_{k}|\text{start with} \{\mathsf{J}_{i}\}_{i=1}^{k}, p_{\mathsf{J}_{i}}\leq p_{\max}].
\end{align*}
Using (\ref{condition_pro}), we derive
\begin{align*}
& L_{k}(p_{\max})\\
\leq &\frac{1}{\mu} + \Big(1-\frac{1}{2}e^{-\frac{2r_{\max}^{2}}{N}}\Big)(L_{k+1}(p_{\max})+L_{k}(p_{\max}))\\
\Rightarrow & L_{k}(p_{\max}) \leq \frac{2e^{\frac{2r_{\max}^{2}}{N}}}{\mu}+\Big(2e^{\frac{2r_{\max}^{2}}{N}}-1\Big)L_{k+1}(p_{\max}),
\end{align*}
and thus
\begin{align*}
L_{k}(p_{\max}) \leq \frac{2e^{\frac{2r_{\max}^{2}}{N}}}{\mu}+\Big(2e^{\frac{2r_{\max}^{2}}{N}}-1\Big)L_{k+1}(p_{\max}).
\end{align*}
This recursion implies that $L_k(p_{\max}) = O(L_N(p_{\max}))$ for all $k \leq N$, given a fixed number of servers.

Since $L_{N}(p_{\max})$ is no more than the busy period of a single server with the same arrivals and initial workload of $Np_{\max}$, we obtain
\begin{align*}
L_{k}(p_{\max}) = O\Big(\frac{p_{\max}}{1-\rho}\Big), \forall k \leq N.
\end{align*}
Let $\mathbb{E}[\mathsf{B}^{(N)}_{\mathsf{NP\text{-}SRPT}}]$ denote the expected duration of a partial busy period, defined as the time from the first job arrival in an empty $N$-server system until the system becomes empty again under the $\mathsf{NP}$-$\mathsf{SRPT}$ policy. Then
\begin{align*}
\mathbbm{E}[\mathsf{B}^{(N)}_{ \mathsf{NP\text{-}SRPT}}] \leq &\mathbbm{E}_{p_{\max}}\Big[\sum_{k=1}^{N} {L_{k}(p_{\max})}\Big] \\
= & \begin{cases}
        O(\frac{\log (\mathbbm{E}[\mathsf{B}^{(N)}_{ \mathsf{NP\text{-}SRPT}}])}{1-\rho}), & \text{case \ref{item:a}} , \\
        O(\frac{(\mathbbm{E}[\mathsf{B}^{(N)}_{ \mathsf{NP\text{-}SRPT}}])^{1/k}}{1-\rho}), & \text{case \ref{item:b}}.
    \end{cases}
\end{align*}
Solving this inequality yields:
\begin{align*}
\mathbbm{E}[\mathsf{B}^{(N)}_{ \mathsf{NP\text{-}SRPT}}] &= 
\begin{cases}
        O(\frac{1}{(1-\rho)^{1+\epsilon}}), & \text{case \ref{item:a}} , \\
        O((\frac{1}{1-\rho})^{\frac{k}{k-1}}), & \text{case \ref{item:b}}.
    \end{cases}
\end{align*}
This completes the proof, following the arguments in Proposition~\ref{promax}.
\end{proof}

\paragraph{Additional lower bounds on optimal response time.} In addition to exponential distribution, Lin et al. \cite{lin2011heavy} also gave a characterization of the heavy-traffic behavior of SRPT with general job size distribution. We first introduce the concept of Matuszewska index, which plays a significant role in the result.

\begin{definition}[Upper Matuszewska Index \cite{lin2011heavy}]
Let $f$ be a positive function defined in $[0,\infty)$, the upper Matuszewska index is defined as the infimum of $\alpha$ for which there exists a constant $C=C(\alpha)$ such that for each $\bar{\lambda}>1$, 
\begin{align*}
\lim\nolimits_{x\rightarrow \infty}\frac{f(\lambda x)}{f(x)}\leq C\lambda^{\alpha},
\end{align*}
holds uniformly for $\lambda\in [1, \bar{\lambda}]$.
\end{definition}
This index helps us understand the asymptotic behavior of function $f(\cdot)$.

\begin{proposition}[\hspace*{-0.6em}\cite{lin2011heavy}] \label{flowboundfact}In an $\mathsf{M}$/$\mathsf{G}$/$1$ queue, if the upper Matuszewska index of the job size distribution is less than $-2$, then
\begin{align*}
\mathbbm{E}{[F^{\mathrm{SRPT-1}}_{\rho}]}=\Theta \Big(\frac{1}{(1-\rho)\cdot G^{-1}(\rho)}\Big),
\end{align*}
where $G^{-1}(\cdot)$ denotes the inverse of $G(x)=\rho_{\leq x}/\rho=\int_{0}^{x}{tf(t) \mathrm{d}t}/\mathbbm{E}[p_{i}]$.
\end{proposition}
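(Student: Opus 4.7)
The plan is to start from the classical Schrage--Miller exact formula for the mean response time of SRPT in $M/G/1$:
\[
\mathbbm{E}[F^{\mathrm{SRPT-1}}_{\rho}] = \int_0^\infty \left( \frac{\lambda \int_0^x t^2 f(t)\,\mathrm{d}t + \lambda x^2 (1-F(x))}{2(1-\rho(x))^2} + \int_0^x \frac{\mathrm{d}y}{1-\rho(y)} \right) f(x)\,\mathrm{d}x,
\]
where $\rho(x) = \lambda \int_0^x t f(t)\,\mathrm{d}t = \rho\cdot G(x)$. The first bracketed term is the expected waiting time of a tagged job of size $x$, and the second is its expected residence time. In the heavy-traffic limit $\rho\to 1$, the integrand blows up in the region where $\rho(x)$ approaches $1$, so the asymptotics are controlled jointly by the tail of $f$ and by the rate at which $G(x)\to 1$.

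The next step is to localize the analysis around the characteristic scale $x^{*}:=G^{-1}(\rho)$, that is, the unique $x$ with $\rho(x)=\rho^{2}$. On $x\leq x^{*}$ one has $1-\rho(x)\geq 1-\rho^{2}$, which is comparable to $1-\rho$, so the contribution of this range to both integrals can be estimated cleanly; on $x>x^{*}$ the Matuszewska index assumption (strictly less than $-2$) supplies a quantitative polynomial upper bound on $f(\lambda x)/f(x)$ that forces $\mathbbm{E}[p^{2}]<\infty$ and makes the tail contribution to the waiting-time integral of lower order. Performing the substitution $u=\rho(x)$, with $\mathrm{d}u=\lambda x f(x)\,\mathrm{d}x$, converts the dominant residence-time contribution into an integral of the form $\int \mathrm{d}u/(x(u)(1-u))$, whose leading-order behavior as the upper limit approaches $\rho$ is precisely $1/((1-\rho)\cdot x^{*})$.

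For the matching lower bound, I would exhibit a specific class of tagged jobs that realize this order. A tagged job of size near $x^{*}$ has expected residence time at least $\int_0^{x^{*}}\mathrm{d}y/(1-\rho(y))=\Omega(x^{*}/(1-\rho))$; because the length-biased density mass around $x^{*}$ carries $\Theta(1)$ of the total rate by the very definition of $G$ and $x^{*}$, integrating against $f(x)\,\mathrm{d}x$ over a neighborhood of $x^{*}$ produces a contribution of order $1/((1-\rho)\, G^{-1}(\rho))$. The Matuszewska condition guarantees that no other region of the $x$-integral contributes at a larger order.

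The main obstacle will be the quantitative tail control: one must show that for $x\gg x^{*}$ both the Palm term $\lambda x^{2}(1-F(x))$ and the truncated second moment $\lambda\int_{0}^{x}t^{2}f(t)\,\mathrm{d}t$ decay fast enough that their product with $(1-\rho(x))^{-2}f(x)$ stays integrable and negligible compared to the leading piece, and this is exactly where the $<-2$ Matuszewska threshold enters---a weaker index would let the waiting-time integral compete with the residence-time integral in a nontrivial way and the clean $\Theta(1/((1-\rho)\,G^{-1}(\rho)))$ scaling could break. A secondary subtlety is that the change of variables $u=\rho(x)$ requires $G$ to be strictly increasing and continuously invertible, which is implicitly supplied by the density $f$ assumed in the problem setup.
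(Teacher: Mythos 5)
This proposition is imported verbatim from Lin, Wierman, and Zwart \cite{lin2011heavy}; the paper contains no proof of it, so there is no in-paper argument to compare against. Your attempt is therefore a self-contained re-derivation, and the right question is whether it is sound.

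Your high-level framework is the correct one---start from the Schrage exact formula, localize around the critical scale $x^{*}=G^{-1}(\rho)$, use a change of variables $u=\rho(x)$, and invoke the Matuszewska index to control the tail---and this is indeed the general shape of the argument in \cite{lin2011heavy}. However, two substantive steps are wrong. First, you declare the residence-time term $\int_0^x \mathrm{d}y/(1-\rho(y))$ to be dominant and the waiting-time term to be a lower-order nuisance; the situation is exactly the reverse. For a job of size $x$ with $\rho(x)$ close to $\rho$, the residence-time integrand contributes $\Theta(1/(1-\rho))$ per unit $y$ only over a vanishing region, and after integrating against $f(x)\,\mathrm{d}x$ (equivalently, after Fubini, $\int_0^\infty (1-F(y))/(1-\rho(y))\,\mathrm{d}y$) one obtains a quantity of order $\log(1/(1-\rho))$ or smaller for exponential/bounded/regularly-varying sizes---strictly lower order than $\Theta(1/((1-\rho)G^{-1}(\rho)))$. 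The leading contribution comes instead from the waiting-time term precisely because it carries the square $(1-\rho(x))^{2}$ in the denominator; after the substitution $u=\rho(x)$ one obtains an integrand of the form $\mathbbm{E}[p^{2}]_{\leq x(u)}/(2(1-u)^{2}x(u))$, and it is the $(1-u)^{-2}$ singularity (integrating to $\Theta(1/(1-\rho))$) combined with $x(u)\approx x^{*}$ near the upper limit that yields $\Theta(1/((1-\rho)G^{-1}(\rho)))$. Your own displayed change-of-variables integral $\int \mathrm{d}u/(x(u)(1-u))$ has only a first-power singularity in $1-u$, which integrates to $\Theta(\log(1/(1-\rho))/x^{*})$, not $\Theta(1/((1-\rho)x^{*}))$ as you asserted; this algebraic slip is a symptom of having attached the argument to the wrong term. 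Second, the lower-bound step claims $\int_0^{x^{*}}\mathrm{d}y/(1-\rho(y))=\Omega(x^{*}/(1-\rho))$, but since $1-\rho(y)\geq 1-\rho$ everywhere, this integral is bounded \emph{above} by $x^{*}/(1-\rho)$, and for small $y$ the integrand is $\Theta(1)$, so the inequality you need does not hold; the correct lower bound must be extracted from the waiting-time term restricted to $x$ with $\rho(x)$ within a constant of $\rho$. Fixing these two points---identifying the waiting-time integral as dominant, and building the lower bound from it rather than from the residence time---would bring the sketch in line with the actual proof in \cite{lin2011heavy}.
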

For example, exponential distribution has an upper Matuszewska index $M_{f}=-\infty$ and $G^{-1}(\rho)=\Theta(\log (1/(1-\rho)))$.
\begin{theorem}\label{generalthem}
The average response time under $\mathsf{NP}$-$\mathsf{SRPT}$  is asymptotic optimal in $\mathsf{M}$/$\mathsf{G}$/$\mathsf{N}$ under Assumption \ref{residualass}, if upper Matuszewska index of job size distribution is less than $-2$ and
\begin{align*}
G^{-1}(\rho)= \begin{cases}
o\Big(\frac{1}{(1-\rho)\cdot \log^{2}(1/(1-\rho))}\Big), &\text{case \ref{item:a}}\\
o\Big(\frac{1}{(1-\rho)^{1-1/k-\epsilon}\cdot \log(1/(1-\rho))}\Big), & \text{case \ref{item:b}}
\end{cases}
\end{align*}
\end{theorem}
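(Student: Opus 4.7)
The plan is to assemble three ingredients already available in the paper: (i) the response-time upper bound of Theorem \ref{lemma1}, (ii) the bound on $\mathbbm{E}[\eta]$ under Assumption \ref{residualass} from the preceding lemma, and (iii) the heavy-traffic lower bound for $\mathsf{SRPT}$ in single-server $M/G/1$ from Proposition \ref{flowboundfact}. The overall goal is to show that the additive gap between $\mathbbm{E}[F^{\mathsf{NP\text{-}SRPT}}_{\rho}]$ and $\mathbbm{E}[F^{\mathrm{SRPT}_{1,N}}_{\rho}]$ is of strictly smaller order than the latter as $\rho \to 1$, so that by the same reasoning as in the proof of Theorem~\ref{heavytrafficthm} the ratio tends to $1$.

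First I would plug the two cases of the preceding lemma into the bound \eqref{unrefine1} of Theorem~\ref{lemma1}. In case \ref{item:a}, $\mathbbm{E}[\eta] = O(\log(1/(1-\rho)))$, giving
\begin{align*}
\mathbbm{E}[F^{\mathsf{NP\text{-}SRPT}}_{\rho}] - \mathbbm{E}[F^{\mathrm{SRPT}_{1,N}}_{\rho}] = O\bigl(\log^{2}(1/(1-\rho))\bigr),
\end{align*}
and in case \ref{item:b}, $\mathbbm{E}[\eta] = O((1-\rho)^{-1/k-\epsilon})$ for arbitrary $\epsilon>0$, so the additive gap is $O((1-\rho)^{-1/k-\epsilon}\log(1/(1-\rho)))$. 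Second, I would lower-bound the benchmark. By \eqref{ineqq1}, $\mathbbm{E}[F^{\pi^{*}}_{\rho}] \geq \mathbbm{E}[F^{\mathrm{SRPT}_{1,N}}_{\rho}]$; since the single-server speed-$N$ system is just a time-rescaling of $M/G/1$, its traffic intensity and its $G(\cdot)$ function are unchanged, so Proposition \ref{flowboundfact} (valid because the upper Matuszewska index is less than $-2$) yields
\begin{align*}
\mathbbm{E}[F^{\mathrm{SRPT}_{1,N}}_{\rho}] = \Theta\!\left(\frac{1}{(1-\rho)\, G^{-1}(\rho)}\right).
\end{align*}

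Third, I would compare the two estimates. In case \ref{item:a} the hypothesis $G^{-1}(\rho) = o\bigl(1/((1-\rho)\log^{2}(1/(1-\rho)))\bigr)$ is exactly equivalent to $\log^{2}(1/(1-\rho)) = o(1/((1-\rho) G^{-1}(\rho)))$, and analogously in case \ref{item:b} the hypothesis is engineered so that $(1-\rho)^{-1/k-\epsilon}\log(1/(1-\rho)) = o(1/((1-\rho) G^{-1}(\rho)))$. Therefore in both cases the additive gap is $o(\mathbbm{E}[F^{\mathrm{SRPT}_{1,N}}_{\rho}])$. Writing
\begin{align*}
\frac{\mathbbm{E}[F^{\mathsf{NP\text{-}SRPT}}_{\rho}]}{\mathbbm{E}[F^{\pi^{*}}_{\rho}]} \leq 1 + \frac{\mathbbm{E}[F^{\mathsf{NP\text{-}SRPT}}_{\rho}] - \mathbbm{E}[F^{\mathrm{SRPT}_{1,N}}_{\rho}]}{\mathbbm{E}[F^{\mathrm{SRPT}_{1,N}}_{\rho}]},
\end{align*}
the second term vanishes as $\rho \to 1$, which gives the asymptotic optimality of $\mathsf{NP}$-$\mathsf{SRPT}$.

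The main obstacle I expect is verifying that the single-server-speed-$N$ lower bound inherits the $\Theta(1/((1-\rho)G^{-1}(\rho)))$ scaling from Proposition \ref{flowboundfact}, which in the paper is stated for $M/G/1$ only. I would handle this by the standard speed-scaling argument: running $\mathrm{SRPT}$ on a speed-$N$ server on arrivals with rate $\lambda$ and workloads with density $f$ is equivalent, after the time change $t \mapsto t/N$, to $\mathrm{SRPT}$ on a unit-speed server with arrival rate $\lambda/N$ and workload density $f(N\cdot)N$. The traffic intensity is preserved, so is the upper Matuszewska index, and $G^{-1}$ is scaled only by the constant $1/N$; consequently Proposition~\ref{flowboundfact} applies up to a constant factor, which is all the $o(\cdot)$ comparison requires. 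Beyond this bookkeeping the argument is a clean substitution, so no additional technical ingredients are needed.
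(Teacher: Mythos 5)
Your proposal is correct, and it reconstructs exactly the argument the paper leaves implicit for Theorem~\ref{generalthem}: substitute the $\mathbbm{E}[\eta]$ bounds from the preceding lemma (proved under Assumption~\ref{residualass}) into the additive gap of Theorem~\ref{lemma1}, lower-bound the benchmark via \eqref{ineqq1} together with Proposition~\ref{flowboundfact}, and observe that the stated growth conditions on $G^{-1}(\rho)$ are precisely what is needed to make the gap $o\bigl(1/((1-\rho)G^{-1}(\rho))\bigr)$ in each case — mirroring the structure already used for the M/M/1 case in Theorem~\ref{mm1opt}. You also correctly flag and dispatch the one small bookkeeping point (Proposition~\ref{flowboundfact} is stated for unit-speed M/G/1 but is applied to the speed-$N$ single server) via a time-rescaling that leaves $\rho$, the Matuszewska index, and $G^{-1}$ unchanged up to constants, which is all the $o(\cdot)$ comparison needs.
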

Theorem \ref{generalthem} identifies sufficient conditions under which the $\mathsf{NP}$-$\mathsf{SRPT}$  scheduling policy achieves asymptotic optimality in terms of average response time. Examples of the class of distribution in Theorem \ref{generalthem} include but not limited to \emph{Weibull distribution}, \emph{Pareto distribution} and \emph{regularly varying distributions} \cite{lin2011heavy}. 

\subsection{Discussions}
In our analysis of $\mathsf{NP\text{-}SRPT}$, we consider a scenario where tasks within the same job cannot be executed in parallel, necessitating sequential processing. This setting naturally accommodates precedence constraints, such as those represented by a Directed Acyclic Graph (DAG), where tasks must follow a specific order (e.g., a topological sort). In this case, $\mathsf{NP\text{-}SRPT}$ remains effective by scheduling only the next ready task (with all predecessors completed) from the job with the shortest remaining processing time, respecting the topological order. This ensures that precedence constraints are satisfied while maintaining the greedy prioritization of $\mathsf{NP\text{-}SRPT}$. When tasks can be processed in parallel, finding the optimal allocation policy or analyzing the mean job response time for different allocation policies remains a challenging open problem \cite{harchol2021open}. 

\color{black}

\section{Extensions to Scheduling with Unknown Job Sizes}\label{sec:unknown}

The preceding analysis has established the asymptotic optimality of $\mathsf{NP}$-$\mathsf{SRPT}$ under the assumption that job sizes are known upon arrival.
However, in many practical systems, this information is not available to the scheduler \cite{scully2021optimal, wang2022approximate}.
This motivates the question of how to design effective scheduling policies when job sizes are unknown.
In this section, we address this challenge by extending the analysis to unknown job size setting.
We build upon the work of Scully et al.~\cite{scully2021optimal}, and adapt their insights to the non-preemptive, multi-task environment.


We specifically consider non-preemptive version of the SOAP policy, \emph{monotone shortest expected remaining processing time} ($\mathsf{M}$-$\mathsf{SERPT}$) and \emph{monotone Gittins}  ($\mathsf{M}$-$\mathsf{Gittins}$) policies, denoted as $\mathsf{NP}$-$\mathsf{M}$-$\mathsf{SERPT}_N$ and $\mathsf{NP}$-$\mathsf{M}$-$\mathsf{Gittins}_N$ respectively. The results for these policies apply to a wide range of job size distributions,  categorized by their tail behavior. The key classes include O-Regularly Varying ($\mathsf{OR}$), Quasi-Decreasing Hazard Rate ($\mathsf{QDHR}$), and distributions in the Gumbel Domain of Attraction ($\mathsf{MDA}(\Lambda)$) \cite{scully2021optimal}. For the sake of brevity and to maintain focus on the main results, the formal mathematical definitions for these classes are provided in Appendix \ref{app:definitions}.

The following lemma relates the performance of these non-preemptive multi-server policies to their single-server preemptive counterparts in the heavy-traffic regime.

\begin{lemma}
Let $\pi \in \{\mathsf{M\text{-}SERPT}, \mathsf{M\text{-}Gittins}\}$ and consider a job size distribution in $\mathsf{OR}(-\infty, -1)\cup (\mathsf{QDHR}\cap \mathsf{MDA}(\Lambda))$, then
\begin{align}\label{eq:limit}
\lim\nolimits_{\rho \to 1} \frac{\mathbbm{E}[F^{\mathsf{NP}\text{-}\pi_N}_{\rho}]}{\mathbbm{E}[F^{\pi_1}_{\rho}]} = 1.
\end{align}
\end{lemma}

\begin{proof}
Since $\pi \in \{\mathsf{M\text{-}SERPT}, \mathsf{M\text{-}Gittins}\}$ is a monotone non-decreasing SOAP policy, we can apply Lemma~\ref{generallemma} to bound the difference in relevant work for job with size $x$ between the non-preemptive $N$-server policy $\mathsf{NP}\text{-}\pi$ and its preemptive single-server counterpart $\pi_{1, N}$:
\begin{align}\label{np-pi-n}
\mathsf{Rel}^{\mathsf{NP}\text{-}\pi}_{\leq x}(t) - \mathsf{Rel}^{\pi_{1, N}}_{\leq x}(t) \leq (N-1) (z^{\pi_{1, N}}_{x} + \eta),
\end{align}
where $z_{x}^{\pi_{1, N}}$ represents the value of $z^{\pi_{1, N}}_{\mathsf{J}}(r^{\pi_{1, N}}(p_{\mathsf{J}}, p_{\mathsf{J}}))$ for job $\mathsf{J}$ with size $x$. Following an argument analogous to that in \cite{scully2021optimal}, inequality~\eqref{np-pi-n} implies a bound on the expected response time:
\begin{align}\label{flow_bound}
&\mathbbm{E}[F^{\mathsf{NP}\text{-}\pi}_{\rho}]-\mathbbm{E}[F^{\pi_{1, N}}_{\rho}]\notag\\
\leq & o(\mathbbm{E}[F^{\pi_{1, N}}_{\rho}]) + O\left(\mathbbm{E}\left[\frac{\eta}{1-\rho(y^{\pi_{1, N}}_{x})}\right]\right),
\end{align}
where $y^{\pi_{1, N}}_{x}$ denotes the \emph{new age cutoff}. Let us define a threshold $\xi$ such that $\rho(y^{\pi_{1, N}}_{\xi}) = \rho/2$, this implies:
\[
\frac{\rho}{2} = \int_{0}^{y^{\pi_{1, N}}_{\xi}} \lambda \bar{F}(t) \, \mathrm{d}t \leq \lambda y^{\pi_{1, N}}_{\xi}.
\]
As established in \cite{scully2021optimal} for $\pi \in \{\mathsf{M\text{-}SERPT}, \mathsf{M\text{-}Gittins}\}$, the age cutoff for the single-server policy, $y^{\pi_{1, N}}_{x}$, exhibits the following asymptotic behavior:
\begin{itemize}
    \item $y^{\pi_{1, N}}_{x} = \Theta(x)$ for job size distributions in $\mathsf{OR}(-\infty, -1)$. 
    \item $y^{\pi_{1, N}}_{x} = \Omega(x^{1/\gamma})$ for some $\gamma \geq 1$, for distributions in $\mathsf{QDHR}\cap \mathsf{MDA}(\Lambda)$. 
\end{itemize}
In both scenarios, this yields a lower bound on $\xi$:
\[
\xi = \Omega\left(\left(\frac{\rho}{2\lambda}\right)^{\gamma}\right)
\]
for some constant $\gamma \geq 1$. Consequently, we can bound the final term in \eqref{flow_bound} as, 
\begin{align*}
&\mathbbm{E}\left[\frac{\eta}{1-\rho(y^{\pi_{1, N}}_{x})}\right] \\
\leq & \left(\frac{1}{1-\rho(y^{\pi_{1, N}}_{\xi})} + \frac{1}{\xi} \mathbbm{E}\left[\frac{x}{1-\rho(y^{\pi_{1, N}}_{x})}\right]\right) \cdot \mathbbm{E}[\eta] \\
\leq & O\left((1-\rho)^{-\varepsilon}\right) \cdot \mathbbm{E}[\eta], \quad \forall \varepsilon > 0.
\end{align*}
Since $\mathbbm{E}[F^{\pi_{1, N}}_{\rho}]$ grows at least polynomially in $(1-\rho)^{-1}$, this error term is asymptotically negligible .
Substituting this result into \eqref{flow_bound} yields the desired limit.
\end{proof}


The lemma establishes that in the heavy-traffic limit, the mean response time under the non-preemptive, $N$-server policies $\mathsf{NP}$-$\mathsf{M}$-$\mathsf{SERPT}_N$ and $\mathsf{NP}$-$\mathsf{M}$-$\mathsf{Gittins}_N$ converges to that of their single-server, preemptive counterparts.
This allows us to leverage known results for the single-server $\mathsf{M}$/$\mathsf{G}$/$1$ setting.
Specifically, the Gittins index policy is optimal, and it is known that $\mathsf{M}$-$\mathsf{Gittins}$ is asymptotically optimal (i.e., $\lim_{\rho \to 1} \mathbbm{E}[F^{\mathsf{M\text{-}Gittins}_1}_{\rho}] / \mathbbm{E}[F^{\mathsf{Gittins}_1}_{\rho}] = 1$) and $\mathsf{M}$-$\mathsf{SERPT}$ is 2-competitive (i.e., $\mathbbm{E}[F^{\mathsf{M\text{-}SERPT}_1}_{\rho}] \leq 2\mathbbm{E}[F^{\mathsf{Gittins}_1}_{\rho}]$).
These facts lead to our main theorem for the unknown job size case.

\begin{theorem}
\label{unkonwn_job_theorem}
In the heavy-traffic limit ($\rho \to 1$) for a multi-task multi-server system with unknown job sizes from a distribution in $\mathsf{OR}(-\infty, -1)\cup (\mathsf{QDHR}\cap \mathsf{MDA}(\Lambda))$:
\begin{enumerate}
    \item[(i)] The $\mathsf{NP}$-$\mathsf{M}$-$\mathsf{Gittins}_N$ policy is asymptotically optimal.
    \item[(ii)] The $\mathsf{NP}$-$\mathsf{M}$-$\mathsf{SERPT}_N$ policy is asymptotically $2$-competitive.
\end{enumerate}
\end{theorem}

\color{black}

\section{Experimental Results}\label{exp}

To evaluate the asymptotic optimality of $\mathsf{NP}$-$\mathsf{SRPT}$, we conducted experiments using a Weibull job size distribution. Weibull distribution has a cumulative distribution function of $F(x)=1-e^{-\mu x^{k}}$, upper Matuszewska index $M_{f}=-\infty$ and $G^{-1}(\rho)=\Theta({(\log (1/(1-\rho))}^{1/k})$. Indeed exponential distribution is a special case of the Weilbull distribution with $\alpha=1$. Jobs were randomly divided into $2$ to $5$ non-preemptive tasks.

As shown in Figure \ref{fig:image}, we analyzed the ratio, calculated as the average response time of $\mathsf{NP}$-$\mathsf{SRPT}$ divided by the average response time of the single server $\mathsf{SRPT}$. The experiments were performed for two different parameters, $k=1$, which reduces to an exponential distribution, and $k=10$. The average job size is set to be $1$, so $\lambda = 1/\Gamma(1+1/k)$. As $\rho$ approaches 1, the ratio converges to 1 for both $k=1$ and $k=10$. This convergence empirically validates the asymptotic optimality of $\mathsf{NP}$-$\mathsf{SRPT}$. Notably, the convergence rate for $k=10$ is faster compared to $k=1$. This observation aligns with the convergence characteristics of the theoretical bound, since as the value of $k$ increases, the expected value of $\eta$ diminishes, and the average response time under single-server SRPT increases.

\begin{figure}[H]
  \centering
  \includegraphics[width=\linewidth]{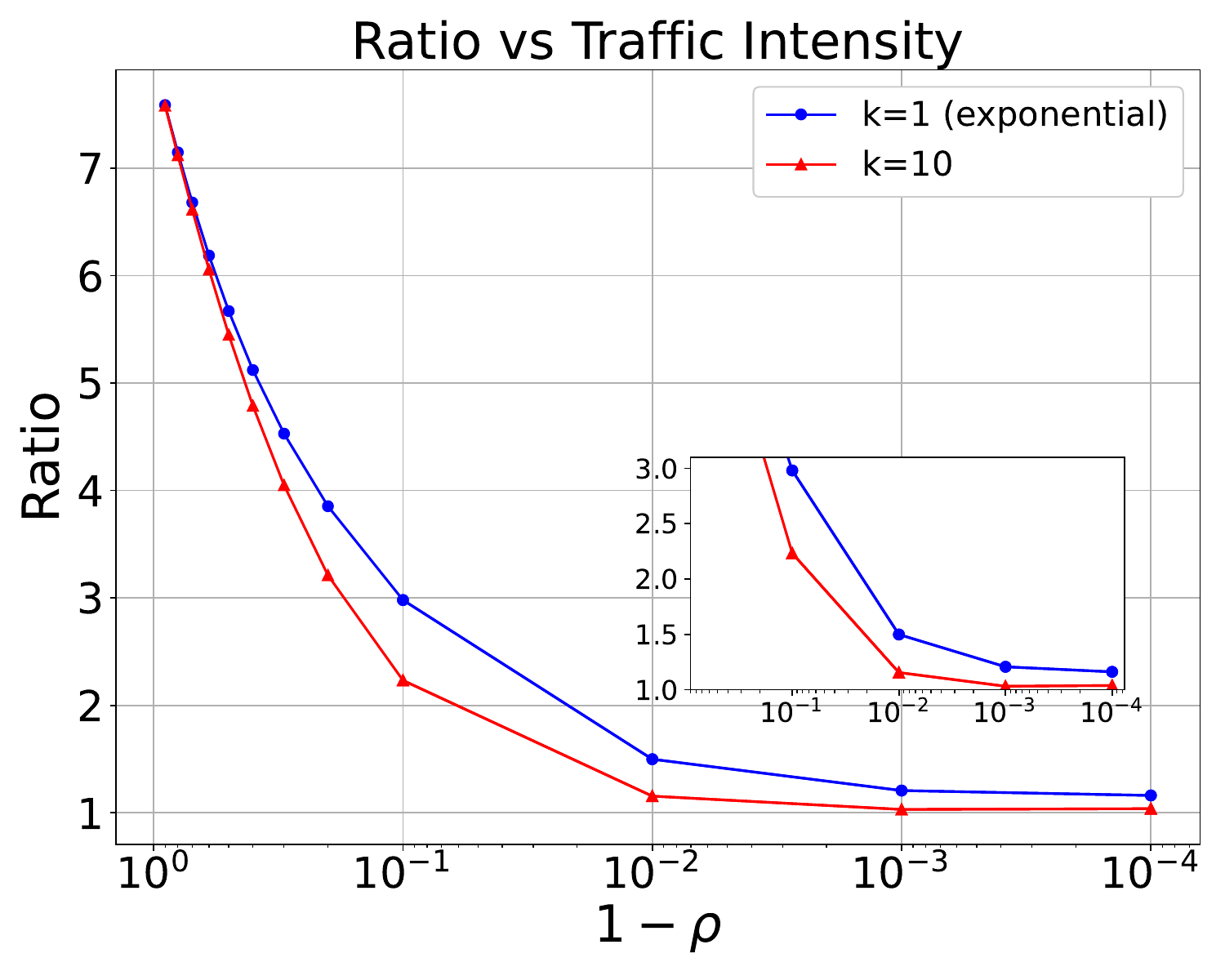}
  \caption{Convergence of ratio with respect to traffic intensity}
  \label{fig:image}
\end{figure}

\section{Conclusion}\label{seccon}
In this work, we study the multitask scheduling  problem, for which the optimal algorithms and tight analyses remain widely open for almost all settings. We propose $\mathsf{NP}$-$\mathsf{SRPT}$ algorithm, which achieves a competitive ratio that is order optimal when the number of machines is constant. Another appealing and more important property of $\mathsf{NP}$-$\mathsf{SRPT}$ is that, the average response time incurred under Poisson arrival is asymptotic optimal when the traffic intensity goes to $1$, if task service times are finite or job size distribution satisfies some mild conditions. {\color{black}Furthermore, by extending the analysis to scenarios with unknown job sizes, we demonstrate the robustness of our approach and establish the asymptotic optimality of policies for such settings.}

\section*{Acknowledgments}  
The author thanks Ziv Scully for his many invaluable suggestions, including simplifying the proof for bounding the remaining workload under NP-SRPT and the usage of the WINE identity, and for his feedback on improving the
presentation of earlier versions of this article. The author also
thanks the anonymous reviewers for their valuable comments and suggestions.
\bibliography{scheduling}
\bibliographystyle{unsrt}

\appendix

\section*{Proof of Lemma \ref{num_of_job_bound}}\label{appendixcomproof}
\begin{proof}
In the following of the proof, we use $\mathcal{C}^{[k]}(\pi, t)=\cup_{i=1}^{k}\mathcal{C}_{i}(\pi, t)$ to denote the collection of jobs in the first $k$ classes, and let $W_{\pi}^{[k]}(t)=\sum_{i=1}^{k} {W_{\pi}^{(i)}(t)}$ represent the total remaining workload of jobs in the first $k$ classes, where $W_{\pi}^{(k)}(\pi, t)$ denotes the amount of remaining workload of jobs in class $\mathcal{C}_k(\pi,t)$. $W_{\pi^{*}_{1,N}}^{(k)}(t)$ and $W_{\pi^{*}_{1,N}}^{[k]}(t)$ are defined in a similar way for $\pi^{*}_{1,N}$. Without loss of generality, we assume $\log p_{\max}$ and $\log p_{\min}$ are integers.

For $\forall t\geq 0$, the number of unfinished jobs under the optimal algorithm is no less than,
\begin{align}
n_{\pi^{*}_{1,N}}(t) \ge & \sum_{k=\log p_{\min}}^{\log p_{\max}+1}{\frac{W_{\pi^{*}_{1,N}}^{(k)}(t)}{2^{k}}}\notag\\
= &\sum_{k=\log p_{\min}}^{\log p_{\max}+1}{\frac{\Big[W_{\pi^{*}_{1,N}}^{[k]}(t)-W_{\pi^{*}_{1,N}}^{[k-1]}(t)\Big]}{2^{k}}}\notag\\
=&\frac{W_{\mathrm{\pi^{*}_{1,N}}}^{[\log p_{\max}+1]}(t)}{2^{\log p_{\max}+1}}+\sum_{k=\log p_{\min}}^{\log p_{\max}+1}{\frac{W_{\mathrm{\pi^{*}_{1,N}}}^{[k]}(t)}{2^{k+1}}}\notag\\
\geq& \sum_{k=\log p_{\min}}^{\log p_{\max}+1}{\frac{W_{\pi^{*}_{1,N}}^{[k]}(t)}{2^{k+1}}}\label{jobaliveoptimal}.
\end{align}
On the other hand, the number of jobs alive under $\mathsf{NP}$-$\mathsf{SRPT}$  can be upper bounded in a similar fashion,
\begin{align*}
n_{ \mathsf{NP\text{-}SRPT}}(t)\leq & \sum_{k=\log p_{\min}}^{\log p_{\max}+1}{\frac{W_{ \mathsf{NP\text{-}SRPT}}^{(k)}(t)}{2^{k-1}}}\\
=&\sum_{k=\log p_{\min}}^{\log p_{\max}+1}{\frac{\Big[W_{ \mathsf{NP\text{-}SRPT}}^{[k]}(t)-W_{ \mathsf{NP\text{-}SRPT}}^{[k-1]}(t)\Big]}{2^{k-1}}}\\
=&\sum_{k=\log p_{\min}}^{\log p_{\max}}{\frac{W_{ \mathsf{NP\text{-}SRPT}}^{[k]}(t)}{2^{k}}}+\frac{W_{ \mathsf{NP\text{-}SRPT}}^{[\log p_{\max}+1]}(t)}{2^{\log p_{\max}}}\\
\leq & \sum_{k=\log p_{\min}}^{\log p_{\max}+1}{\frac{W_{ \mathsf{NP\text{-}SRPT}}^{[k]}(t)}{2^{k-1}}}.
\end{align*}
Using Lemma~\ref{remaining_workload}, we are able to relate the number of unfinished jobs under two algorithms,
\begin{align*}
n_{ \mathsf{NP\text{-}SRPT}}(t) \leq& \sum_{k=\log p_{\min}}^{\log p_{\max}+1}{\frac{W_{ \mathsf{NP\text{-}SRPT}}^{[k]}(t)}{2^{k-1}}}\\
\leq &\sum_{k=\log p_{\min}}^{\log p_{\max}+1}{\frac{W_{\pi^{*}_{1,N}}^{[k]}(t)}{2^{k-1}}}+\sum_{k=\log p_{\min}}^{\log p_{\max}+1}{\frac{N\cdot (2^{k}+\eta)}{2^{k-1}}}\\
\leq &4n_{\pi^{*}_{1,N}}(t)+4N\cdot \Big(\log \alpha+\beta+1\Big),
\end{align*}
where the last inequality follows from inequality (\ref{jobaliveoptimal}). 
\end{proof}

\color{black}
\section*{Formal Definitions of Job Size Distribution Classes}
\label{app:definitions}

Our analytical results for unknown job size setting in Theorem \ref{unkonwn_job_theorem}, apply to specific families of job size distributions that are formally defined as follows.

\begin{definition}[Job Size Distribution Classes \cite{scully2021optimal}]
The relevant classes are characterized by their asymptotic tail behavior.
\begin{description}
    \item[\textbf{O-Regularly Varying ($\mathsf{OR}$)}] This class describes distributions with heavy, power-law-like tails. A function $f$ is considered O-regularly varying if for all sufficiently large values (i.e., for all $y \ge x \ge x_{0}$), its growth is bounded above and below by power laws. This requires the existence of constants $C_{0}, x_{0} > 0$ and exponents $\beta \ge \alpha > 0$ such that:
    $$
    \frac{1}{C_{0}}\left(\frac{y}{x}\right)^{-\beta} \le \frac{f(y)}{f(x)} \le C_{0}\left(\frac{y}{x}\right)^{-\alpha}
    $$
    A distribution is classified as $\mathsf{OR}(-\beta_{0}, -\alpha_{0})$ if its tail function $\bar{F}$ meets this criterion with exponents that can be chosen to satisfy $\alpha_{0} < \alpha \le \beta < \beta_{0}$.

    \item[\textbf{Quasi-Decreasing Hazard Rate ($\mathsf{QDHR}$)}] This class relaxes the strict condition of a monotonically decreasing hazard rate. A distribution belongs to $\mathsf{QDHR}$ if its hazard rate $h(x)$, for all $x \ge x_{0}$, is bounded by a strictly increasing function $m$. The relationship is formalized by an exponent $\gamma \ge 1$ and constants $C_{0}, x_{0} > 0$ in the following manner:
    $$
    m(x) \le \frac{1}{h(x)} \le m(C_{0}x^{\gamma})
    $$

    \item[\textbf{Gumbel Domain of Attraction ($\mathsf{MDA}(\Lambda)$)}] Rooted in extreme value theory, this class contains distributions whose tails are lighter than any Pareto-like distribution, meaning they decay faster than any power law. For the purpose of our analysis, the essential property of a distribution in $\mathsf{MDA}(\Lambda)$ is that its tail function $\bar{F}(x)$ must satisfy:
    $$
    \bar{F}(x) = o(x^{-\alpha}) \quad \text{for all } \alpha > 0.
    $$
\end{description}
\end{definition}

\vfill

\end{document}